\documentclass[12pt]{article}
\usepackage{enumerate}
\usepackage{amsfonts}
\usepackage{amsmath}
\usepackage{amssymb}
\usepackage{amsthm}

\def\H{\mathcal{H}}
\def\K{\mathcal{K}}
\def\P{\mathcal{P}}
\def\S{\mathfrak{S}}
\def\C{\mathfrak{C}}
\def\T{\mathfrak{T}}
\def\B{\mathfrak{B}}

\newcommand{\id}{\mathrm{Id}}
\newcommand{\Tr}{\mathrm{Tr}}

\newcounter{defin}  \newcounter{lemma}  \newcounter{theorem}
\newcounter{property} \newcounter{corol}  \newcounter{remark} \newcounter{example}

\newenvironment{lemma}{\par\refstepcounter{lemma}
     \textbf{Lemma \thelemma.} }{\rm\par}
\newenvironment{theorem}{\par\refstepcounter{theorem}
     \textbf{Theorem \thetheorem.}\ }{\rm\par}
\newenvironment{property}{\par\refstepcounter{property}
     \textbf{Proposition \theproperty.}\ }{\rm\par}
\newenvironment{corollary}{\par\refstepcounter{corol}
     \textbf{Corollary \thecorol.} }{\rm\par}
\newenvironment{definition}{\par\refstepcounter{defin}
     \textbf{Definition \thedefin.}\ }{\rm\par}
\newenvironment{remark}{\par\refstepcounter{remark}
     \textbf{Remark \theremark.}}{\rm\par}

\begin{document}
\title{Reversibility conditions for quantum channels and their applications.}
\author{M.E. Shirokov\footnote{email:msh@mi.ras.ru}\\
Steklov Mathematical Institute, RAS, Moscow}
\date{}
\maketitle 
\begin{abstract}
A necessary condition for reversibility (sufficiency) of a quantum
channel with respect to complete families of states with bounded
rank is obtained. A full description (up to isometrical equivalence)
of all quantum channels reversible with respect to orthogonal and
nonorthogonal complete families of pure states is given. Some
applications in quantum information theory are considered.

The main results can be formulated in terms of the operator algebras theory (as conditions for reversibility of 
channels between algebras of all bounded operators).
\end{abstract}
\vspace{5pt}
\tableofcontents

\section{Introduction}

Reversibility (sufficiency) of a quantum channel
$\Phi:\mathfrak{S}(\mathcal{H}_A)\rightarrow\mathfrak{S}(\mathcal{H}_B)$
with respect to a family $\S$ of states in
$\mathfrak{S}(\mathcal{H}_A)$ means existence of a quantum channel
$\Psi:\mathfrak{S}(\mathcal{H}_B)\rightarrow\mathfrak{S}(\mathcal{H}_A)$
such that $\Psi(\Phi(\rho))=\rho$ for all $\rho\in\S$.

The notion of reversibility of a channel naturally arises in
analysis of different general questions of quantum information
theory and quantum statistics \cite{H&Co,J&P,J-rev,O&Co,P-sqc,TEC}. For example,
the famous Petz's theorem states that an equality in the inequality
$$
H(\Phi(\rho)\hspace{1pt}\|\hspace{1pt}\Phi(\sigma))\leq H(\rho\hspace{1pt}\|\hspace{1pt}\sigma),\quad
\rho,\hspace{1pt}\sigma\in\mathfrak{S}(\mathcal{H}_A),
$$
expressing the fundamental monotonicity property of the quantum relative
entropy, holds if and only if the channel $\Phi$ is reversible with
respect to the states $\rho$ and $\sigma$.

It follows from this theorem that the Holevo quantity\footnote{The Holevo
quantity $\chi(\{\pi_i,\rho_i\})\doteq\sum_i\pi_i
H(\rho_i\|\hspace{1pt}\bar{\rho})$, where
$\bar{\rho}=\sum_i\pi_i\rho_i$, provides an upper bound for
accessible classical information which can be obtained by applying a
quantum measurement \cite{H-SCI,N&Ch}.} of an ensemble
$\{\pi_i,\rho_i\}$ of quantum states is preserved under action of a
quantum channel $\Phi$, i.e.
$$
\chi(\{\pi_i,\Phi(\rho_i)\})=\chi(\{\pi_i,\rho_i\}),
$$
if and only if the channel $\Phi$ is reversible with respect to the
family $\{\rho_i\}$. Further analysis shows that preserving
conditions for many others important characteristics under action of
a quantum channel are also reduced to the reversibility condition
\cite{H&Co,J-rev}. In \cite{TEC} it is shown that a criterion for an
equality between the constrained Holevo capacity and the quantum
mutual information of a quantum channel $\Phi$ can be formulated in
terms of reversibility of the complementary channel $\widehat{\Phi}$
with respect to particular families of pure states.\smallskip

In this paper we study conditions for reversibility of a quantum
channel by using the notion of a complementary channel, whose
essential role in analysis of different problems of quantum
information theory was shown recently \cite{H-c-c,KMNR}. By using Petz's
theorem we prove that reversibility  of a quantum channel with
respect to complete families of states with rank $\leq r$ implies
that the complementary channel has the Kraus representation
consisting of operators with rank $\leq r$ (Theorem \ref{main}). In
the case of families of pure states (states with rank $=1$) this
result leads to simple criterion of reversibility, which gives a
full description (up to isometrical equivalence) of all quantum
channels reversible with respect to given orthogonal and
nonorthogonal complete families of pure states (Proposition \ref{orth-f}, Theorem
\ref{non-orth-2}).

Some applications of the obtained results in quantum information
theory are considered in the last part of the paper (Theorem
\ref{main++} and its corollaries).

\section{Preliminaries} Let $\H$ be either a finite dimensional or
separable Hilbert space, $\B(\H)$ and
$\mathfrak{T}( \mathcal{H})$ -- the Banach spaces of all bounded
operators in $\mathcal{H}$ and of all trace-class operators in
$\H$  correspondingly, $\S(\H)$ -- the closed convex subset
of $\mathfrak{T}( \H)$ consisting of positive operators
with unit trace called \emph{states} \cite{H-SCI,N&Ch}.\smallskip

Denote by $I_{\mathcal{H}}$ and $\mathrm{Id}_{\mathcal{H}}$ the
unit operator in a Hilbert space $\mathcal{H}$ and the identity
transformation of the Banach space $\mathfrak{T}(\mathcal{H})$
correspondingly.\smallskip

A family $\{|\psi_{i}\rangle\}$ of vectors in a Hilbert space $\H$ is
called \emph{overcomplete} if
$$
\sum_{i}|\psi_{i}\rangle\langle \psi_{i}|=I_{\H}.
$$

Let $H(\rho)$ and $H(\rho\|\sigma)$ be respectively the von Neumann
entropy of the state $\rho$ and the quantum relative entropy of the
states $\rho$ and $\sigma$~\cite{H-SCI,N&Ch,O&P}.\smallskip

A linear completely positive trace preserving map
$\Phi:\mathfrak{T}(\mathcal{H}_A)\rightarrow\mathfrak{T}(\mathcal{H}_B)$
is called  \emph{quantum channel} \cite{H-SCI,N&Ch}.
\smallskip

For a given channel
$\Phi:\mathfrak{T}(\mathcal{H}_A)\rightarrow\mathfrak{T}(\mathcal{H}_B)$
the Stinespring theorem implies existence of a Hilbert space
$\mathcal{H}_E$ and of an isometry
$V:\mathcal{H}_A\rightarrow\mathcal{H}_B\otimes\mathcal{H}_E$ such
that
\begin{equation}\label{Stinespring-rep}
\Phi(\rho)=\mathrm{Tr}_{\mathcal{H}_E}V\rho V^{*},\quad
\rho\in\mathfrak{T}(\mathcal{H}_A).
\end{equation}
A quantum  channel
\begin{equation}\label{c-channel}
\mathfrak{T}(\mathcal{H}_A)\ni
\rho\mapsto\widehat{\Phi}(\rho)=\mathrm{Tr}_{\mathcal{H}_B}V\rho V^{*}\in\mathfrak{T}(\mathcal{H}_E)
\end{equation}
is called \emph{complementary} to the channel $\Phi$
\cite{H-c-c}.\footnote{The quantum channel $\widehat{\Phi}$ is also
called \emph{conjugate} to the channel $\Phi$ \cite{KMNR}.} The
complementary channel is defined uniquely in the following sense: if
$\widehat{\Phi}':\mathfrak{T}(\mathcal{H}_A)\rightarrow\mathfrak{T}(\mathcal{H}_{E'})$
is a channel defined by (\ref{c-channel}) via the Stinespring
isometry
$V':\mathcal{H}_A\rightarrow\mathcal{H}_B\otimes\mathcal{H}_{E'}$
then the channels $\widehat{\Phi}$ and $\widehat{\Phi}'$ are
isometrically equivalent in the sense of the following definition
\cite{H-c-c}.\smallskip

\begin{definition}\label{isom-eq}
Channels
$\Phi:\mathfrak{T}(\mathcal{H}_A)\rightarrow\mathfrak{T}(\mathcal{H}_B)$
and
$\,\Phi':\mathfrak{T}(\mathcal{H}_{A})\rightarrow\mathfrak{T}(\mathcal{H}_B')\,$
are \emph{isometrically equivalent} if there exists a partial
isometry $W:\mathcal{H}_B\rightarrow\mathcal{H}_{B'}$ such that
\begin{equation}\label{c-isom}
\Phi'(\rho)=W\Phi(\rho)W^*,\quad\Phi(\rho)=W^*\Phi'(\rho)W,\quad \rho\in \T(\H_A).
\end{equation}
\end{definition}

The notion of isometrical equivalence is very close to the notion of
unitary equivalence. Indeed, the isometrical equivalence of the
channels $\Phi$ and $\Phi'$ means unitary equivalence of these
channels with the output spaces $\H_B$ and $\H_{B'}$ replaced by
their subspaces
$\H^{\Phi}_B=\bigvee_{\rho\in\S(\H_A)}\mathrm{supp}\Phi(\rho)$ and
$\H^{\Phi'}_{B'}=\bigvee_{\rho\in\S(\H_A)}\mathrm{supp}\Phi'(\rho)$.\footnote{We
denote by $\mathrm{supp}\rho$ the support of a state $\rho$ (the
subspace $(\mathrm{ker}\rho)^{\perp}$).} We use the notion of
isometrical equivalence, since dealing with a given representation
of a quantum channel $\Phi$ it not easy in general to determine the
corresponding subspace $\H^{\Phi}_B$. \smallskip

The Stinespring representation (\ref{Stinespring-rep}) is called
\emph{minimal} if the subspace
$$
\mathcal{M}=\left\{\,(X\otimes
I_E)V|\varphi\rangle\;|\;\varphi\in\H_A,\, X\in\B(\H_B)\,\right\}
$$
is dense in $\mathcal{H}_B\otimes\mathcal{H}_E$. The complementary
channel $\widehat{\Phi}$ defined by (\ref{c-channel}) via the
minimal Stinespring representation has the following property:
\begin{equation}\label{full-rank}
\widehat{\Phi}(\rho)\;\text{is a full rank state
in}\;\S(\H_E)\;\text{for any full rank
state}\;\rho\;\textrm{in}\;\S(\H_A).
\end{equation}

The Stinespring representation (\ref{Stinespring-rep}) generates the
Kraus representation
\begin{equation}\label{Kraus-rep}
\Phi(\rho)=\sum_{k}V_{k}\rho V^{*}_{k},\quad
\rho\in\,\mathfrak{T}(\mathcal{H}),
\end{equation}
where $\{V_{k}\}$ is the set of bounded linear operators from
$\mathcal{H}_A$ into $\mathcal{H}_B$ such that
$\sum_{k}V^{*}_{k}V_{k}=I_{\H_A}$ defined by the relation
$$
\langle\varphi|V_k\psi\rangle=\langle\varphi\otimes
k|V\psi\rangle,\quad\varphi\in\H_B,\psi\in\H_A,
$$
where $\{|k\rangle\}$ is a particular orthonormal basis in the space
$\mathcal{H}_E$. The corresponding complementary channel is
expressed as follows
\begin{equation}\label{Kraus-rep-c}
\widehat{\Phi}(\rho)=\sum_{k,l}\mathrm{Tr}\left[V_{k}\rho V_{l}^{*}\right]|k\rangle\langle
l|,\quad \rho\in\,\mathfrak{T}(\mathcal{H}_A).
\end{equation}

The following class of quantum channels  plays an essential role in
this paper \cite{H-SCI,N&Ch}.
\smallskip
\begin{definition}\label{c-q-def}
A channel $\Phi:\T(\H_A)\rightarrow\T(\H_B)$ is called
\emph{classical-quantum} (briefly, \emph{c-q channel}) if it has the
following representation
\begin{equation}\label{c-q-rep}
\Phi(\rho)=\sum_{k=1}^{\dim\H_A}\langle
k|\rho|k\rangle\sigma_k,\quad \rho\in \T(\H_A),
\end{equation}
where $\{|k\rangle\}$ is an orthonormal basis in $\H_A$ and
$\{\sigma_k\}$ is a collection of states in $\S(\H_B)$.
\end{definition}\smallskip

C-q channel (\ref{c-q-rep}) for which $\sigma_k=\sigma$ for all
$k$ is a \emph{completely depolarizing} channel
$\Phi(\rho)=\sigma\Tr\rho$, where $\sigma$ is a given state in
$\S(\H_B)$.
\medskip

The Schmidt rank of a pure state $\omega$ in $\S(\H\otimes\K)$ can
be defined as the operator rank of the isomorphic
states $\Tr_{\K}\omega$ and $\Tr_{\H}\omega$ \cite{T&H}.\smallskip

The Schmidt class $\,\S_r$ of order $r\in\mathbb{N}\,$ is the
minimal convex closed subset of $\S(\H\otimes\K)$ containing all
pure states with the Schmidt rank $\leq r$, i.e. $\S_r$ is the
convex closure of these pure states \cite{T&H,Sh-18}.\footnote{In
finite dimensions the convex closure coincides with the convex hull
by the Caratheodory theorem, but in infinite dimensions even the set
of all \emph{countable} convex mixtures of pure states with the
Schmidt rank $\leq r$ is a proper subset of $\S_r$ for each $r$
\cite{Sh-18}.} In this notations $\S_1$ is the set of all separable
(non-entangled) states in $\S(\H\otimes\K)$.\smallskip

A channel $\Phi:\T(\H_A)\rightarrow\T(\H_B)$ is called \emph{entanglement-breaking} if for an
arbitrary Hilbert space $\K$ the state $\Phi\otimes\id_{\K}(\omega)$
is separable for any state $\omega\in\S(\H_A\otimes\K)$
\cite{e-b-ch}. This notion is generalized in \cite{p-e-b-ch} as
follows.\smallskip\pagebreak

\begin{definition}\label{p-e-b-ch-d}
A channel $\Phi:\T(\H_A)\rightarrow\T(\H_B)$ is called
$r$-\emph{partially entanglement-breaking }(briefly $r$-PEB) if for
an arbitrary Hilbert space $\K$  the state
$\Phi\otimes\id_{\K}(\omega)$ belongs to the Schmidt class
$\S_r\subset\S(\H_B\otimes\K)$ for any state
$\omega\in\S(\H_A\otimes\K)$.\smallskip
\end{definition}

In this notations entanglement-breaking channels are $1$-PEB
channels. Properties of $r$-PEB channels in finite dimensions are
studied in \cite{p-e-b-ch}, where it is proved, in particular, that
the class of $r$-PEB channels coincides with the class of channels
having  Kraus representation (\ref{Kraus-rep}) such that
$\mathrm{rank}V_k\leq r$ for all $k$. But in infinite dimensions the
first class is essentially wider than the second one, moreover, for
each $r$ there exist $r$-PEB channels  such that all operators in
any their Kraus representations have infinite rank
\cite{Sh-18}.\medskip

Following \cite{J-rev,O&Co} introduce the basic notion of this
paper.
\smallskip
\begin{definition}\label{rev-def}
A channel $\Phi:\T(\H_A)\rightarrow\T(\H_B)$ is \emph{reversible}
with respect to a family $\S\subseteq\S(\H_A)$ if there exists a
channel $\,\Psi:\T(\H_B)\rightarrow\T(\H_A)$ such that
$\,\rho=\Psi\circ\Phi(\rho)\,$ for all $\,\rho\in\S$.\footnote{This
property is also called sufficiency of the channel $\Phi$ with
respect to the family $\S$ \cite{J&P,P-sqc}.}
\end{definition}\smallskip

The channel $\Psi$ will be called \emph{reversing channel}.
\smallskip

Note that  reversibility is a common property for
isometrically equivalent channels.\smallskip
\begin{lemma}\label{isom-eq-l}
\emph{Let $\,\Phi:\T(\H_A)\rightarrow\T(\H_B)$ and
$\,\Phi':\T(\H_A)\rightarrow\T(\H_{B'})$ be quantum channels
isometrically equivalent in the sense of Def.\ref{isom-eq}. If the
channel $\,\Phi$ is reversible with respect to a family
$\,\S\subseteq\S(\H_A)$ then the channel $\,\Phi'$ is reversible
with respect to this family $\,\S$ and vice versa.}
\end{lemma}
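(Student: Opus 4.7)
The statement is symmetric in $\Phi$ and $\Phi'$ (their roles are interchanged simply by replacing the partial isometry $W$ with $W^*$), so I would only prove the forward implication: given a reversing channel $\Psi:\T(\H_B)\to\T(\H_A)$ for $\Phi$ on $\S$, construct a reversing channel $\Psi':\T(\H_{B'})\to\T(\H_A)$ for $\Phi'$ on $\S$.

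The natural candidate is to precompose $\Psi$ with the adjoint action $\sigma\mapsto W^{*}\sigma W$. This is completely positive, but it is not trace-preserving because $W$ is only a partial isometry (so $WW^{*}$ is a projection that may be strictly smaller than $I_{B'}$). To repair trace-preservation I would pick any fixed state $\rho_{0}\in\S(\H_A)$ and define
\begin{equation*}
\Psi'(\sigma)\;=\;\Psi\bigl(W^{*}\sigma W\bigr)+\Tr\bigl[(I_{B'}-WW^{*})\sigma\bigr]\,\rho_{0},\qquad \sigma\in\T(\H_{B'}).
\end{equation*}
Since $W$ is bounded, $W^{*}\sigma W\in\T(\H_B)$ whenever $\sigma\in\T(\H_{B'})$, so the formula makes sense; complete positivity of $\Psi'$ is immediate (sum of CP maps), and trace-preservation follows from $\Tr[W^{*}\sigma W]=\Tr[WW^{*}\sigma]$ and $WW^{*}+(I_{B'}-WW^{*})=I_{B'}$.

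The remaining point is to verify that $\Psi'$ reverses $\Phi'$ on $\S$. The key observation, which I would extract from the two equalities in (\ref{c-isom}), is that $\Phi'(\rho)=W\Phi(\rho)W^{*}$ has support inside the range of $W$; equivalently $(I_{B'}-WW^{*})\Phi'(\rho)=0$. Hence for every $\rho\in\S$,
\begin{equation*}
\Psi'(\Phi'(\rho))=\Psi\bigl(W^{*}\Phi'(\rho)W\bigr)+0=\Psi(\Phi(\rho))=\rho,
\end{equation*}
using the identity $W^{*}\Phi'(\rho)W=\Phi(\rho)$ from (\ref{c-isom}) and the assumed reversibility of $\Phi$.

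There is no serious obstacle here: the only subtle point is that one cannot simply take $\Psi'(\sigma)=\Psi(W^{*}\sigma W)$, because a partial isometry kills the component of $\sigma$ orthogonal to the range of $W$ and destroys trace-preservation. The ``garbage state'' correction term fixes this without affecting the reversibility, precisely because the outputs of $\Phi'$ live in the range of $W$ and so never feel that correction.
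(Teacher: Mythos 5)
Your proof is correct and follows essentially the same route as the paper: the paper defines the correction channel $\Theta(\cdot)=W^*(\cdot)W+\sigma\,\Tr[(I_{\H_{B'}}-WW^*)(\cdot)]$ with a fixed state $\sigma\in\S(\H_B)$ and takes $\Psi\circ\Theta$ as the reversing channel, which is exactly your $\Psi'$ with $\rho_0=\Psi(\sigma)$. The paper leaves the verification as ``easy to see,'' whereas you spell out the key point that $\Phi'(\rho)$ is supported in the range of $W$, so the garbage term never contributes.
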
\smallskip
\textbf{Proof.} Let $\Psi$ be a reversing channel for the channel
$\Phi$, i.e. $\Psi\circ\Phi(\rho)=\rho$ for all $\rho\in\S$.
Consider the channel $\Theta(\cdot)=W^*(\cdot)W+\sigma\Tr(I_{\H_{B'}}-WW^*)(\cdot)$ from
$\S(\H_{B'})$ into $\S(\H_{B})$, where $W$ is the partial isometry
in (\ref{c-isom}) and $\sigma$ is a given state in $\S(\H_B)$. It is easy to see that
$\Psi\circ\Theta$ is a reversing channel for the channel $\Phi'$.
$\square$
\smallskip

Petz's theorem gives criterion of reversibility of a channel with respect to families of two states.
It will be used in this paper in the following
reduced form.\smallskip

\begin{theorem}\label{P-th}\cite{P-sqc}
\emph{Let $\,\Phi:\T(\H_A)\rightarrow\T(\H_B)$ be a  quantum
channel, $\rho$ and $\sigma$ states in $\S(\H_A)$ such that
$H(\rho\hspace{1pt}\|\hspace{1pt}\sigma)<+\infty$. Let $\,\Theta_{\sigma}:\T(\H_B)\rightarrow\T(\H_A)$ be
the predual channel to the linear completely positive unital map
$$
\Theta^*_{\sigma}(\cdot)=A\Phi\left(B(\cdot)B\right)A,\quad
A=[\Phi(\sigma)]^{-1/2},\; B=[\sigma]^{1/2},
$$
from $\B(\H_A)$ into $\B(\H_B)$. The following statements are equivalent:}
\begin{enumerate}[(i)]
    \item $H(\Phi(\rho)\|\hspace{1pt}\Phi(\sigma))=H(\rho\hspace{1pt}\|\hspace{1pt}\sigma)$;
    \item \emph{the channel $\,\Phi$ is reversible with respect the states $\rho$ and $\sigma$};
    \item $\rho=\Theta_{\sigma}(\Phi(\rho))$.
\end{enumerate}
\end{theorem}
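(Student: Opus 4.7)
The plan is to prove the cycle (iii)$\Rightarrow$(ii)$\Rightarrow$(i)$\Rightarrow$(iii). For (iii)$\Rightarrow$(ii) I would first verify directly that the Petz map always recovers $\sigma$ itself, i.e.\ $\Theta_\sigma(\Phi(\sigma))=\sigma$: pairing with an arbitrary $X\in\B(\H_A)$ and using the cyclic property of the trace yields $\Tr[X\,\Theta_\sigma(\Phi(\sigma))]=\Tr[\Theta^*_\sigma(X)\Phi(\sigma)]=\Tr[\Phi(BXB)\,A\Phi(\sigma)A]$, in which $A\Phi(\sigma)A$ is the support projection of $\Phi(\sigma)$ and $\mathrm{supp}\,\Phi(BXB)\subseteq\mathrm{supp}\,\Phi(\sigma)$ by positivity, so this collapses to $\Tr[\Phi(BXB)]=\Tr[BXB]=\Tr[X\sigma]$. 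Combined with the hypothesis $\rho=\Theta_\sigma(\Phi(\rho))$, the single channel $\Theta_\sigma$ reverses $\Phi$ on both $\rho$ and $\sigma$.

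The implication (ii)$\Rightarrow$(i) is immediate from monotonicity of the relative entropy: given a reversing channel $\Psi$, apply monotonicity twice to obtain
$H(\rho\|\sigma)=H(\Psi\Phi(\rho)\|\Psi\Phi(\sigma))\leq H(\Phi(\rho)\|\Phi(\sigma))\leq H(\rho\|\sigma)$,
which forces equality throughout.

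The substantive direction is (i)$\Rightarrow$(iii). My plan is to follow Petz: use the integral representation $-\log x=\int_0^\infty\bigl[(x+t)^{-1}-(1+t)^{-1}\bigr]\,dt$ to rewrite both $H(\rho\|\sigma)$ and $H(\Phi(\rho)\|\Phi(\sigma))$ as integrals over $t\geq 0$ of quadratic forms in the resolvents of the relative modular operators $\Delta_{\sigma,\rho}$ and $\Delta_{\Phi(\sigma),\Phi(\rho)}$. The integrand difference is nonnegative for each $t$ by a Kadison--Schwarz inequality applied to the $2$-positive unital map $\Theta^*_\sigma$ acting on the relevant resolvent. Hypothesis (i) forces these Schwarz inequalities to saturate for almost every $t$, which is equivalent to the resolvent family lying in the multiplicative domain of $\Theta^*_\sigma$; dualising and rearranging this saturation condition gives exactly the identity $\rho=\Theta_\sigma(\Phi(\rho))$.

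The main obstacle is this last step: upgrading a family of Schwarz-equality conditions indexed by the real parameter $t$ to the single operator identity $\rho=\Theta_\sigma(\Phi(\rho))$, and, in the infinite-dimensional setting, properly handling the unbounded operators $A=[\Phi(\sigma)]^{-1/2}$ and the modular operators on appropriate supports. The finiteness assumption $H(\rho\|\sigma)<+\infty$ enters precisely here: it forces $\mathrm{supp}\,\rho\subseteq\mathrm{supp}\,\sigma$, so the modular construction makes sense on a dense domain, and it ensures convergence of the integral representation underlying the Schwarz argument. Everything else in the proof is bookkeeping around these technical points.
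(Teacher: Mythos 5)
Your treatment of (iii)$\Rightarrow$(ii) and (ii)$\Rightarrow$(i) is correct and matches the standard route: the computation showing $\Theta_\sigma(\Phi(\sigma))=\sigma$ via the support projection $A\Phi(\sigma)A$ is exactly right, and the double application of monotonicity is immediate. The paper does not spell these out; it cites \cite{P-sqc} for the theorem in the faithful (full rank) case.

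The gap is in (i)$\Rightarrow$(iii). What you outline there is a reconstruction of Petz's original modular-theoretic proof, which is valid for faithful states and which the paper does not reprove but simply cites. The hypothesis $H(\rho\|\sigma)<+\infty$ lets one assume $\sigma$ is full rank, but it does not make $\rho$ full rank, and for non-faithful $\rho$ your plan fails precisely at the point you dismiss as bookkeeping: the GNS vector of $\rho$ is not separating, so the operator $a\xi_\rho\mapsto a^*\xi_\sigma$ underlying the relative modular operator $\Delta_{\sigma,\rho}$ is not well defined without modification, and the saturated Schwarz inequalities control $\Theta^*_\sigma$ only on the support of $\rho$; upgrading that to the global identity $\rho=\Theta_\sigma(\Phi(\rho))$ is exactly the unresolved point (the author explicitly remarks that he knows of no reference for the infinite-dimensional statement without the full-rank assumption on $\rho$). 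The paper's own proof (Appendix 5.1) closes this by a genuinely different argument: set $\sigma_t=t\rho+(1-t)\sigma$, use Donald's identity together with term-by-term monotonicity to propagate hypothesis (i) to the equalities $H(\Phi(\rho)\|\Phi(\sigma_t))=H(\rho\|\sigma_t)$ and $H(\Phi(\sigma)\|\Phi(\sigma_t))=H(\sigma\|\sigma_t)$, invoke the Jencova--Petz results for the faithful dominating state $\sigma_t$ to get $\rho=\Theta_t(\Phi(\rho))$ for every $t\in(0,1)$, and then prove $\lim_{t\rightarrow+0}\Theta_t=\Theta_\sigma$ in the strong convergence topology using a compactness criterion for the family $\{\Theta_t\}$ and a continuity lemma for $(\rho_n,A_n)\mapsto\sqrt{\rho_n}\,A_n\sqrt{\rho_n}$. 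That reduction-and-limit argument is the substantive content of the paper's proof and is absent from your proposal.
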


Note that $\sigma=\Theta_{\sigma}(\Phi(\sigma))$ by definition of the channel $\Theta_{\sigma}$.

This theorem is proved in \cite{P-sqc} in general von Neumann
algebras setting for normal faithful states, i.e. for full rank
states  $\rho$ and $\sigma$ in our terminology. Since the condition
$H(\rho\hspace{1pt}\|\hspace{1pt}\sigma)<+\infty$ implies
$\,\mathrm{supp}\rho\subseteq\mathrm{supp}\sigma$, we always may
assume that $\sigma$ is a full rank state. A possible generalization
to the case $\mathrm{supp}\rho\neq\H_A$ is presented in Appendix 5.1
(in finite dimensions it follows from the Theorem in
\cite[Sect.5.1]{H&Co}).\smallskip

\begin{definition}\label{comp}
A family $\S$ of states in $\S(\H)$ is called \emph{complete} if for any
nonzero operator $A$ in $\B_{+}(\H)$ there exists a state
$\rho\in\S$ such that $\Tr A\rho>0$.
\end{definition}\smallskip

A family $\{|\varphi_{\lambda}\rangle\langle\varphi_{\lambda}|\}_{\lambda\in\Lambda}$ of pure states in
$\S(\H)$ is complete if and only if the linear hull of the family
$\{|\varphi_{\lambda}\rangle\}_{\lambda\in\Lambda}$ is dense in $\H$. By Lemma 2 in \cite{J&P}
an arbitrary complete family of states in $\S(\H)$ contains a
countable complete subfamily.\smallskip

Petz's theorem implies the following criterion for reversibility of
a channel with respect to countable complete families of states.
\smallskip

\begin{theorem}\label{PJ-th}\cite{J&P}
\emph{A quantum channel $\,\Phi:\T(\H_A)\rightarrow\T(\H_B)$ is
reversible with respect to a complete countable family
$\,\{\rho_i\}$ of states in $\,\S(\H_A)$ if and only if
$\,\rho_i=\Theta_{\bar{\rho}}(\Phi(\rho_i))$ for all $\,i$, where
$\bar{\rho}=\sum_i\pi_i\rho_i$ and $\,\{\pi_i\}$ is any
non-degenerate probability distribution.}
\end{theorem}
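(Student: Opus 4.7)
The plan is to reduce the statement to a direct application of Petz's theorem (Theorem \ref{P-th}), using completeness of $\{\rho_i\}$ to produce a well-behaved reference state $\bar\rho$ that works uniformly for the whole family.

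The ``if'' direction is immediate: if $\rho_i=\Theta_{\bar\rho}(\Phi(\rho_i))$ for every $i$, then the single channel $\Theta_{\bar\rho}:\T(\H_B)\to\T(\H_A)$ reverses $\Phi$ on each $\rho_i$ by definition, so $\Phi$ is reversible with respect to $\{\rho_i\}$.

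For the ``only if'' direction, suppose a reversing channel $\Psi$ exists with $\Psi\circ\Phi(\rho_i)=\rho_i$ for all $i$. I would first observe that completeness of $\{\rho_i\}$ forces $\bar\rho=\sum_i\pi_i\rho_i$ to be a full rank state in $\S(\H_A)$: any unit vector $|\psi\rangle$ orthogonal to $\mathrm{supp}\bar\rho=\bigvee_i\mathrm{supp}\rho_i$ would give a nonzero $A=|\psi\rangle\langle\psi|\in\B_{+}(\H_A)$ with $\Tr A\rho_i=0$ for every $i$, contradicting Definition \ref{comp}. Hence $\bar\rho$ is a legitimate reference state and the map $\Theta_{\bar\rho}$ is unambiguously defined as in Theorem \ref{P-th}. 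By linearity, $\Psi\circ\Phi(\bar\rho)=\sum_i\pi_i\Psi\circ\Phi(\rho_i)=\bar\rho$, so $\Phi$ is reversible with respect to each pair $\{\rho_i,\bar\rho\}$ via the same channel $\Psi$.

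Next I would verify the finiteness hypothesis of Petz's theorem. Since $\pi_i>0$ we have the operator inequality $\rho_i\leq\pi_i^{-1}\bar\rho$, and operator monotonicity of the logarithm yields $\log\rho_i\leq\log\bar\rho+\log\pi_i^{-1}\cdot I_{\H_A}$, whence $H(\rho_i\|\bar\rho)=\Tr\rho_i(\log\rho_i-\log\bar\rho)\leq\log\pi_i^{-1}<+\infty$. Applying the implication (ii)$\Rightarrow$(iii) of Theorem \ref{P-th} to the pair $(\rho_i,\bar\rho)$ then produces $\rho_i=\Theta_{\bar\rho}(\Phi(\rho_i))$ for each $i$, as required.

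The conceptual heart of the argument is the recognition that the barycenter $\bar\rho$ serves as a universal Petz-reference state for the entire family; everything else is essentially bookkeeping. The only step that deserves a little care is showing $\bar\rho$ is full rank, which is precisely where completeness of $\{\rho_i\}$ enters, and checking finiteness of $H(\rho_i\|\bar\rho)$. In the infinite-dimensional separable setting one should also note that the series defining $\bar\rho$ converges in trace norm (since $\sum_i\pi_i=1$), so $\bar\rho\in\S(\H_A)$ is genuine and Petz's theorem applies in the form recalled above.
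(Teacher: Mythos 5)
Your proof is correct; note that the paper itself states this result as a citation to Jencova--Petz and gives no proof of its own, and your argument (full rank of $\bar\rho$ from completeness, the bound $H(\rho_i\Vert\bar\rho)\leq\log\pi_i^{-1}$, then the implication $\mathrm{(ii)}\Rightarrow\mathrm{(iii)}$ of Petz's theorem applied to each pair $(\rho_i,\bar\rho)$ with the single reversing channel $\Psi$) is exactly the standard derivation the author intends. The only point worth flagging is that since the $\rho_i$ need not be full rank, you are implicitly relying on the version of Theorem \ref{P-th} valid for degenerate $\rho$ (the paper addresses this in Appendix 5.1), and your operator-monotonicity step should be read on $\mathrm{supp}\,\rho_i$; the resulting bound $H(\rho_i\Vert\bar\rho)\leq\log\pi_i^{-1}$ is nonetheless correct.
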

\smallskip

\section{Conditions for reversibility of a channel}

\subsection{Families of states with bounded rank}

For a given channel $\,\Phi:\T(\H_A)\rightarrow\T(\H_B)$ let
$\H^{\Phi}_B=\bigvee_{\rho\in\S(\H_A)}\mathrm{supp}\Phi(\rho)$ and
$$
m(\Phi)=\left\{\begin{array}{lc}
        \dim\ker\Phi^* & \textup{if}\;\, \H^{\Phi}_B=\H_B \\
        \dim\ker\Phi^*|_{\B(\H^{\Phi}_B)} & \textup{if}\;\, \H^{\Phi}_B\neq\H_B
        \end{array}\right.,
$$
where $\Phi^*|_{\B(\H^{\Phi}_B)}$ is the restriction of the dual map
$\,\Phi^*:\B(\H_B)\rightarrow\B(\H_A)$ to the subspace
$\B(\H^{\Phi}_B)$ of $\B(\H_B)$. It is clear that
$m(\Phi)=\min_{\Psi\sim\Phi}\dim\ker\Psi^*$, where the minimum is
over all channels $\Psi$ isometrically equivalent to the channel
$\Phi$ in the sense of Def.\ref{isom-eq}.\smallskip

Petz's theorem implies the following necessary condition for
reversibility of a quantum channel with respect to families of
states with bounded rank expressed in terms of the complementary
channel.\smallskip

\begin{theorem}\label{main} \emph{Let $\,\S=\{\rho_i\}^n_{i=1},$ $n\leq+\infty$,
be a complete family of states in $\,\S(\H_A)$ such that
$\,\mathrm{rank}\rho_i\leq r\,$ for all $\,i$.} \emph{If a channel
$\,\Phi:\T(\H_A)\rightarrow\T(\H_B)$ is reversible with respect to
the family $\,\S$ then its complementary channel $\,\widehat{\Phi}$
has Kraus representation (\ref{Kraus-rep}) consisting of $\,\leq
n\times\min\{m(\Phi)+r^2, \dim\H^{\Phi}_B\}$ summands such that
$\;\mathrm{rank}V_k\leq r$ for all $\,k$ and hence
$\,\widehat{\Phi}$ is a $r$-partially entanglement-breaking channel
(Def.\ref{p-e-b-ch-d}).}
\smallskip

\emph{If the above hypothesis holds with $\,r=1\,$, i.e.
$\rho_i=|\varphi_i\rangle\langle\varphi_i|$ for all $\,i$, then
\begin{equation}\label{c-ch-rep}
\widehat{\Phi}(\rho)=\sum_{i=1}^n\langle\phi_i|\rho|\phi_i\rangle\sum_{k=1}^m
|\psi_{ik}\rangle\langle\psi_{ik}|,
\end{equation}
where $\,m=\min\{m(\Phi)+1, \dim\H^{\Phi}_B\}$,
$\,\{|\phi_i\rangle\}^n_{i=1}$ is an overcomplete system of vectors in
$\H_A$ defined by means of an arbitrary non-degenerate probability
distribution $\,\{\pi_i\}^n_{i=1}$ as follows
\begin{equation}\label{v-phi-rep}
|\phi_i\rangle=\sqrt{\pi_i\bar{\rho}_{\pi}^{\,-1}}|\varphi_i\rangle,\quad
\bar{\rho}_{\pi}=\sum^n_{i=1}\pi_i|\varphi_i\rangle\langle\varphi_i|,
\end{equation}
and $\,\{|\psi_{ik}\rangle\}$ is a collection of vectors in a Hilbert
space $\H_E$ such that\break $\sum_{k=1}^m \|\psi_{ik}\|^2=1$ and
$\,\langle\psi_{il}|\psi_{ik}\rangle=0$ for all $\,k\neq l$ for each
$\,i=\overline{1,n}$.}
\end{theorem}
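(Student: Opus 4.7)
The strategy is to apply the Petz-type criterion for complete families (Theorem~\ref{PJ-th}) to the average state $\bar\rho=\sum_i\pi_i\rho_i$, which is of full rank by completeness of $\{\rho_i\}$. This yields the identities $\rho_i=\Theta_{\bar\rho}(\Phi(\rho_i))$ for every $i$. Fixing a Kraus decomposition $\Phi(\rho)=\sum_l V_l\rho V_l^*$ coming from a minimal Stinespring dilation of $\Phi$, the Petz recovery channel $\Theta_{\bar\rho}$ has Kraus operators $W_k=\bar\rho^{\,1/2}V_k^*[\Phi(\bar\rho)]^{-1/2}$, and each reversibility identity rewrites as $\sum_{k,l}(W_kV_l)\rho_i(W_kV_l)^*=\rho_i$.

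From these identities I would extract rigidity as follows. Every summand on the left is positive and majorised by $\rho_i$, so each operator $W_kV_l$ preserves $\mathrm{supp}\,\rho_i$, whose dimension is at most $r$. Gluing the resulting local block structures across $i$ (using that the supports $\mathrm{supp}\,\rho_i$ collectively span a dense subspace of $\H_A$ by completeness) yields a decomposition of the Kraus operators of $\Phi$ into blocks of size at most $r$. Passing to the complementary channel via (\ref{Kraus-rep-c}), the Kraus operators of $\widehat\Phi$ inherit a compatible block structure and each have rank $\leq r$. The counting bound $n\times\min\{m(\Phi)+r^2,\dim\H_B^\Phi\}$ then comes from noting that the index $i$ runs over at most $n$ values, while for each $i$ the number of independent Kraus terms is bounded both by $\dim\H_B^\Phi$ and by $m(\Phi)+r^2$, the latter reflecting the dimension of the $r\times r$ block space modulo the inessential part of the output algebra measured by $m(\Phi)$.

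For $r=1$, i.e.\ $\rho_i=|\varphi_i\rangle\langle\varphi_i|$, the rigidity step becomes immediate: a pure state written as a sum of positive rank-one operators forces each summand to be proportional, so $W_kV_l|\varphi_i\rangle=c_{kl}^{(i)}|\varphi_i\rangle$ for scalars $c_{kl}^{(i)}$. Substituting the explicit form of $W_k$ and using the relation $\sqrt{\pi_i\bar\rho_\pi^{-1}}|\varphi_i\rangle=|\phi_i\rangle$ from (\ref{v-phi-rep}), I would rewrite $\widehat\Phi(\rho)=\sum_{k,l}\mathrm{Tr}[V_k\rho V_l^*]|k\rangle\langle l|$ from (\ref{Kraus-rep-c}) as a sum indexed by $i$, with POVM elements $|\phi_i\rangle\langle\phi_i|$ and vectors $|\psi_{ik}\rangle$ collected from the residual data of the $V_l$ after extracting their scalar action on $|\varphi_i\rangle$; this yields formula (\ref{c-ch-rep}) with Kraus operators of the form $|\psi_{ik}\rangle\langle\phi_i|$. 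The orthogonality $\langle\psi_{il}|\psi_{ik}\rangle=0$ for $k\neq l$ and the normalisation $\sum_k\|\psi_{ik}\|^2=1$ follow by diagonalising the Gram-type matrices arising in this reorganisation, and the bound $m\leq m(\Phi)+1$ reflects how many linearly independent $|\psi_{ik}\rangle$ can appear for each $i$.

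The main obstacle will be the rigidity step for $r>1$, where extremality of pure states is unavailable: one must analyse channels on the $r$-dimensional space $\mathrm{supp}\,\rho_i$ that fix the positive definite operator $\rho_i|_{\mathrm{supp}\,\rho_i}$, and then glue these local structures across varying $i$ using completeness. The bookkeeping for $m(\Phi)+r^2$ versus $\dim\H_B^\Phi$ also needs care, since these quantities depend on the chosen dilation and on distinguishing the effective output space $\H_B^\Phi$ from $\H_B$.
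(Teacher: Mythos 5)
Your starting point (the Jen\v{c}ov\'a--Petz criterion, Theorem \ref{PJ-th}, applied with the full--rank average state $\bar\rho=\sum_i\pi_i\rho_i$) agrees with the paper, but the core of the argument is missing, and the route you sketch for supplying it would not work. The theorem is a statement about the Kraus operators of the \emph{complementary} channel $\widehat{\Phi}$, yet your rigidity step aims at a block structure for the Kraus operators $V_l$ of $\Phi$ itself, followed by the assertion that ``the Kraus operators of $\widehat\Phi$ inherit a compatible block structure.'' They do not: the ranks of Kraus operators of $\Phi$ and of $\widehat\Phi$ are essentially unrelated (the identity channel has a single full--rank Kraus operator while its complementary channel, $\rho\mapsto\Tr\rho$, has rank--one Kraus operators), so no rank bound on $\widehat\Phi$ can be extracted this way. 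Moreover, even for $r=1$ your extremality relation $W_kV_l|\varphi_i\rangle=c^{(i)}_{kl}|\varphi_i\rangle$ concerns the operators $V_k^*[\Phi(\bar\rho)]^{-1/2}V_l$, whereas $\widehat\Phi$ is built from the traces $\Tr[V_k\rho V_l^*]$, i.e.\ from $V_l^*V_k$; the passage from one to the other is exactly what needs proving and is not supplied. Finally, you concede that the $r>1$ case is an unresolved obstacle, so the first (and main) assertion of the theorem is not established at all.

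The paper's mechanism is different and closes all of these gaps at once, uniformly in $r$. One writes $\widehat{\Phi}(\rho)=\sum_{k=1}^d V_k\rho V_k^*$ via a \emph{minimal} Stinespring dilation and forms $\Psi=\widehat{\widehat{\Phi}}$, which is isometrically equivalent to $\Phi$ and hence (Lemma \ref{isom-eq-l}) also reversible with respect to $\S$. Theorem \ref{PJ-th} then yields the operator identity $A_i=\Psi^*(B_i)$ with $A_i=\pi_i\bar\rho^{-1/2}\rho_i\bar\rho^{-1/2}$ of rank $\le r$ and $\Psi^*(C)=\sum_{k,l}\langle l|C|k\rangle V_l^*V_k$. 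A dimension count (Lemma \ref{s-rep}: the rank--one pieces of $B_i$ are mapped by $\Psi^*$ into the $r^2$--dimensional algebra supported on $\mathrm{supp}\,A_i$, modulo $\ker\Psi^*$) shows $B_i=\sum_{j=1}^m|\psi_{ij}\rangle\langle\psi_{ij}|$ with $m=\min\{\dim\ker\Psi^*+r^2,\dim\H_C\}$; since $\sum_iB_i=I_{\H_C}$ by full rank of $\Psi(\bar\rho)$, the vectors $\{|\psi_{ij}\rangle\}$ form an overcomplete system and generate (Lemma \ref{new-kraus-rep}) a new Kraus representation of $\widehat\Phi$ with operators $W_{ij}=\sum_k\langle\psi_{ij}|k\rangle V_k$ satisfying $\sum_jW_{ij}^*W_{ij}=A_i$, whence $\mathrm{rank}\,W_{ij}\le r$ and the count $n\times m$ follows. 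For $r=1$ one gets $W_{ij}=|\eta_{ij}\rangle\langle\phi_i|$ directly from $\sum_jW_{ij}^*W_{ij}=|\phi_i\rangle\langle\phi_i|$, which is representation (\ref{c-ch-rep}). None of these steps --- the passage to $\Psi=\widehat{\widehat{\Phi}}$, the identity $A_i=\Psi^*(B_i)$, the decomposition of $B_i$, or the induced Kraus representation of $\widehat\Phi$ --- appears in your proposal, so the proof has a genuine gap rather than being an alternative route.
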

\medskip

The first assertion of Theorem \ref{main} means that the channel
$\widehat{\Phi}$ has the following property:  for an arbitrary
Hilbert space $\K$  and any state $\omega$ in $\S(\H_A\otimes\K)$
the state $\widehat{\Phi}\otimes\id_{\K}(\omega)$ is a
\emph{countably decomposable} state in the Schmidt class
$\S_r\subset\S(\H_B\otimes\K)$, i.e. it can be represented as a
countable convex mixture of pure states having the Schmidt rank
$\leq r$  (there exist states in $\S_r$ which are not countably
decomposable \cite{Sh-18}).\smallskip

The second assertion of Theorem \ref{main} implies the criteria of reversibility of a quantum channel
with respect to orthogonal families of pure states
considered in the next subsection (Proposition \ref{orth-f} and Corollary \ref{orth-f-c}).\smallskip

\textbf{Proof.}  Let $\,\widehat{\Phi}(\rho)=\sum_{k=1}^{d}V_k\rho
V_k^*$, $\,d\leq+\infty$, be the Kraus representation of the channel
$\widehat{\Phi}:\T(\H_A)\rightarrow\T(\H_E)$ obtained via its
minimal Stinespring representation with the isometry
$V:\H_A\rightarrow\H_E\otimes\H_{C}$ (see Section 2). The
complementary channel $\Psi=\widehat{\widehat{\Phi}}$ to the channel
$\widehat{\Phi}$ defined via this representation is expressed as
follows
$$
\T(\H_A)\ni\rho\mapsto\Psi(\rho)=\sum_{k,l=1}^d \Tr V_k\rho
V_l^*|k\rangle\langle l|\in\T(\H_{C}),
$$
where $\,\{|\,k\rangle\}_{k=1}^d$ is an orthonormal basis in the
$d$-dimensional Hilbert space $\H_C$.

Since $\Psi=\widehat{\widehat{\Phi}}$, the channels $\Phi$ and
$\Psi$ are isometrically equivalent. By Lemma \ref{isom-eq} below
the channel $\Psi$ is reversible with respect to the set
$\{\rho_i\}$.\smallskip

Let $\{\pi_i\}^n_{i=1}$ be an arbitrary non-degenerate probability
distribution and $\bar{\rho}=\sum^n_{i=1}\pi_i\rho_i$. By property (\ref{full-rank})
$\Psi(\bar{\rho})$ is a full rank state in $\S(\H_C)$. By Theorem
\ref{PJ-th} the reversibility condition implies $A_i=\Psi^*(B_i)$
for all $i$, where
$A_i=\pi_i(\bar{\rho})^{-1/2}\rho_i(\bar{\rho})^{-1/2}$ and
$B_i=\pi_i(\Psi(\bar{\rho}))^{-1/2}\Psi(\rho_i)(\Psi(\bar{\rho}))^{-1/2}$
are positive operators in $\B(\H_A)$ and in $\B(\H_{C})$
correspondingly.

Note that
$$
\Psi^*(C)=\sum_{k,l=1}^d \langle l |C|k\rangle V_l^*V_k,\quad
C\in\B(\H_C).
$$

Since $A_i=\Psi^*(B_i)$ is an operator of rank $\leq r$, Lemma
\ref{s-rep} below implies $B_i=\sum_{j=1}^m|\psi_{ij}\rangle\langle
\psi_{ij}|$, where $m=\min\{\dim\ker\Psi^*+r^2,\, \dim\H_C\}$ and
$\,\{|\psi_{ij}\rangle\}_j$ is a set of vectors in $\H_C$, for each
$i$.

Since $\Psi(\bar{\rho})$ is a full rank state in $\S(\H_C)$, we have
\begin{equation*}
\sum_{i=1}^n\sum_{j=1}^m|\psi_{ij}\rangle\langle
\psi_{ij}|=\sum_{i=1}^n B_i=I_{\H_C}.
\end{equation*} By Lemma
\ref{new-kraus-rep} below
\begin{equation}\label{c-ch-rep+}
\widehat{\Phi}(\rho)=\sum_{i=1}^n\sum_{j=1}^m W_{ij}\rho W_{ij}^*,
\end{equation}
where $W_{ij}=\sum^d_{k=1}\langle\psi_{ij}|k\rangle V_k$.\smallskip

Since $A_i=\Psi^*(\sum_{j=1}^m|\psi_{ij}\rangle\langle
\psi_{ij}|)\,$ is an operator of rank $\leq r$ for each $i$ and
\begin{equation}\label{n-eq}
\Psi^*(|\psi_{ij}\rangle\langle \psi_{ij}|)=\sum_{k,l=1}^d \langle l
|\psi_{ij}\rangle\langle\psi_{ij}|k\rangle
V_l^*V_k=\,W_{ij}^*W_{ij},
\end{equation}
the family $\{W_{ij}\}$ consists of operators of rank $\leq r$. To
complete the proof of the first part of the theorem it suffices to
note that $\dim\H_C=\dim\H^{\Phi}_B$ and  $\dim\ker\Psi^*=m(\Phi)$,
since the partial isometry expressing the isometrical equivalence of
the channels $\Phi$ and $\Psi$ is an isometrical embedding of $\H_C$
into $\H_B$ (due to full rank of the state $\Psi(\bar{\rho})\in\S(\H_C)$).
\smallskip

If $\rho_i=|\varphi_i\rangle\langle\varphi_i|$ for each $i$ then
$A_i=|\phi_{i}\rangle\langle\phi_{i}|$, where the vector
$|\phi_{i}\rangle$ is defined by (\ref{v-phi-rep}), and (\ref{n-eq})
implies
$$
|\phi_{i}\rangle\langle\phi_{i}|=\sum_{j=1}^m\Psi^*(|\psi_{ij}\rangle\langle
\psi_{ij}|)=\sum_{j=1}^m W_{ij}^*W_{ij}.
$$
Hence $W_{ij}=|\eta_{ij}\rangle\langle\phi_{i}|$ for all $i$ and
$j$, where $\{|\eta_{ij}\rangle\}$ is a set of vectors in $\H_E$
such that $\sum_{j=1}^m \|\eta_{ij}\|^2=1$ for each
$i=\overline{1,n}$.\smallskip

It follows from (\ref{c-ch-rep+}) that
$$
\widehat{\Phi}(\rho)=\sum_{i=1}^n\langle\phi_i|\rho|\phi_i\rangle\sum_{j=1}^{m}
|\eta_{ij}\rangle\langle\eta_{ij}|,\quad \rho\in\T(\H_A),
$$
By using spectral decomposition of the states $\sum_{j=1}^{m}
|\eta_{ij}\rangle\langle\eta_{ij}|$, $i=\overline{1,n}$, we obtain
representation (\ref{c-ch-rep}). $\square$
\smallskip

\begin{lemma}\label{s-rep}
\emph{Let $\,\Phi:\T(\H_A)\rightarrow\T(\H_B)$ be a quantum
channel.} \emph{If $B$ is a positive operator in $\B(\H_B)$ such
that $\,\mathrm{rank}\Phi^*(B)=r<+\infty$ then
$B=\sum_{j=1}^{m}|\psi_j\rangle\langle\psi_j|$, where
$\,m=\min\left\{\dim\ker\Phi^*+r^2,\, \dim\H_B\right\}$ and
$\,\{|\psi_j\rangle\}_{j=1}^{m}$ is a set of vectors in $\H_B$.}
\end{lemma}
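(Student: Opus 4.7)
The plan is to reduce the lemma to the single rank bound $\mathrm{rank}\,B\leq\dim\ker\Phi^*+r^2$, since the companion bound $\mathrm{rank}\,B\leq\dim\H_B$ is trivial. Once this is established, the spectral decomposition $B=\sum_{j=1}^{\mathrm{rank}\,B}\lambda_j|e_j\rangle\langle e_j|$ with $\lambda_j>0$ and $\{|e_j\rangle\}$ orthonormal yields $B=\sum_j|\psi_j\rangle\langle\psi_j|$ with $|\psi_j\rangle=\sqrt{\lambda_j}\,|e_j\rangle$, and one pads with zero vectors to reach exactly $m$ terms.

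The key observation is a finite-dimensional compression: since $\Phi^*$ is positive and $\Phi^*(B)=\sum_j\lambda_j\Phi^*(|e_j\rangle\langle e_j|)$ is a positive operator of rank $r$ written as a sum of positive operators, each $\Phi^*(|e_j\rangle\langle e_j|)$ must be supported on the $r$-dimensional subspace $\mathcal{R}=\mathrm{supp}\,\Phi^*(B)$. The real vector space $\mathrm{Herm}(\mathcal{R})$ of Hermitian operators with support in $\mathcal{R}$ has real dimension $r^2$. For any finite $N\leq\mathrm{rank}\,B$ the real-linear map $T:\mathbb{R}^N\to\mathrm{Herm}(\mathcal{R})$ defined by $T(\alpha)=\Phi^*\!\left(\sum_{j=1}^N\alpha_j|e_j\rangle\langle e_j|\right)$ has kernel of real dimension at least $N-r^2$. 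Each nonzero $\alpha\in\ker T$ produces a nonzero Hermitian element $\sum_j\alpha_j|e_j\rangle\langle e_j|$ of $\ker\Phi^*$; orthogonality of the $|e_j\rangle$ ensures that these operators are real-linearly independent, and real-linear independence of Hermitian operators implies complex-linear independence in $\B(\H_B)$ (by splitting complex coefficients into real and imaginary parts). Hence $\dim\ker\Phi^*\geq N-r^2$, so $N\leq\dim\ker\Phi^*+r^2$. Taking $N=\mathrm{rank}\,B$ when it is finite gives the desired inequality; when $\mathrm{rank}\,B=\infty$ the same bound for arbitrarily large $N$ forces $\dim\ker\Phi^*=\infty$, so $m=\dim\H_B$ and the trivial bound $\mathrm{rank}\,B\leq\dim\H_B$ suffices.

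The main subtlety is the real-versus-complex dimension bookkeeping combined with the compression to a finite-dimensional target space, which is what allows the counting argument to succeed even when $\H_B$ is infinite-dimensional and $B$ has infinitely many nonzero eigenvalues.
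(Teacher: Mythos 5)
Your argument is, at its core, the same dimension count as the paper's: restrict $\Phi^*$ to the span of the rank-one components of $B$, observe that the image lies in the operators supported on the $r$-dimensional subspace $\mathrm{supp}\,\Phi^*(B)$ (a space of dimension $r^2$, whether one counts real dimensions of Hermitian operators as you do or complex dimensions of $\B(\H_r)$ as the paper does), and conclude $\mathrm{rank}\,B\leq\dim\ker\Phi^*+r^2$ by rank--nullity. The real-versus-complex bookkeeping is handled correctly, and the reduction of the lemma to this single rank bound (padding with zero vectors, and falling back to $m=\dim\H_B$ in the trivial case) is fine.

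The one step that fails in full generality is the very first one: you write $B=\sum_j\lambda_j|e_j\rangle\langle e_j|$ with orthonormal eigenvectors and $\lambda_j>0$. The lemma assumes only that $B$ is a positive operator in $\B(\H_B)$ with $\H_B$ possibly infinite-dimensional; such an operator need not be compact and may have no eigenvectors at all (e.g.\ an operator with purely continuous spectrum), in which case the map $T$ cannot be defined for any $N$, and both branches of your argument (finite and infinite rank) lose their input. The paper avoids this by taking $|\psi_j\rangle=B^{1/2}|j\rangle$ for an orthonormal basis $\{|j\rangle\}$ of $\H_B$, so that $B=\sum_j|\psi_j\rangle\langle\psi_j|$ with every finite partial sum $\leq B$; hence each $\Phi^*(|\psi_j\rangle\langle\psi_j|)$ is still supported in $\mathrm{supp}\,\Phi^*(B)$, and a maximal linearly independent subset of $\{|\psi_j\rangle\}$ has exactly $\mathrm{rank}\,B$ elements because these vectors span a dense subspace of $\mathrm{supp}\,B$. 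Substituting such vectors for your eigenvectors repairs the proof with no other change, since linear independence of the vectors (orthogonality is not needed) already gives linear independence of the rank-one operators $|\psi_j\rangle\langle\psi_j|$.
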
\smallskip

\textbf{Proof.} Note first that for an arbitrary orthonormal basis
$\{|j\rangle\}$ in $\H_B$ we have
$B=\sum_{j=1}^{\dim\H_B}|\psi_j\rangle\langle\psi_j|$, where
$|\psi_j\rangle=B^{1/2}|j\rangle$. So, the assertion of the lemma is
nontrivial only if $\,\dim\ker\Phi^*+r^2<\dim\H_B$, i.e. if
$\,m=\dim\ker\Phi^*+r^2<+\infty$.

In this case we may assume that the first $\,n=\mathrm{rank}B\,$ vectors
of the above family $\{|\psi_j\rangle\}$ are linearly independent.
It follows that the operators $|\psi_j\rangle\langle\psi_j|$,
$j=\overline{1,n}$, generates a $\,n$-dimensional subspace $\B_n$ of
$\B(\H_A)$. Since $B\geq\sum_{j=1}^{n}|\psi_j\rangle\langle\psi_j|$
and the operator $\Phi^*(B)$ is supported by a $\,r$-dimensional
subspace $\H_r$ of $\H_A$, the operators
$\Phi^*(|\psi_j\rangle\langle\psi_j|)$ lie in $\B(\H_r)$ for
$j=\overline{1,n}$. Thus $\Phi^*(\B_n)\subseteq\B(\H_r)$ and hence
$$
\mathrm{rank}B=n=\dim\B_n\leq\dim\ker\Phi^*+\dim\B(\H_r)=\dim\ker\Phi^*+r^2=m.
$$
Since $B$ is a positive operator of rank $\leq m<+\infty$, the
finite-dimensional spectral theorem implies
$B=\sum_{j=1}^{m}|\psi'_j\rangle\langle\psi'_j|$, where
$\{|\psi'_j\rangle\}$ are orthogonal set of eigenvectors of $B$.
$\square$
\smallskip

\begin{lemma}\label{new-kraus-rep}
\emph{Let $\,\Phi(\rho)=\sum_{k=1}^{d}V_k\rho V_k^*$ be a quantum
channel and $\,\{|k\rangle\}_{k=1}^d$ an orthonormal basis in
$\,d$-dimensional Hilbert space $\,\H_d$, $d\leq+\infty$. An arbitrary
overcomplete system $\{|\psi_i\rangle\}$ of vectors in $\H_d$
generates the Kraus representation $\,\Phi(\rho)=\sum_{i}W_i\rho
W_i^*$ of the channel $\,\Phi$, where
$W_i=\sum^d_{k=1}\langle\psi_i|k\rangle V_k$.}
\end{lemma}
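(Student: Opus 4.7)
The plan is to verify the claim by direct computation, using the overcompleteness relation $\sum_i |\psi_i\rangle\langle\psi_i|=I_{\H_d}$ to collapse the double sum over $k,l$ arising from $W_i\rho W_i^*$ into a single diagonal sum matching the original Kraus decomposition.

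First I would expand, for each $i$,
\begin{equation*}
W_i\rho W_i^*=\sum_{k,l=1}^{d}\langle\psi_i|k\rangle\overline{\langle\psi_i|l\rangle}\,V_k\rho V_l^*
=\sum_{k,l=1}^{d}\langle\psi_i|k\rangle\langle l|\psi_i\rangle\,V_k\rho V_l^*.
\end{equation*}
Summing over $i$ and interchanging the order of summation (justified in the trace-class sense below), the scalar coefficient becomes
\begin{equation*}
\sum_{i}\langle l|\psi_i\rangle\langle\psi_i|k\rangle=\Bigl\langle l\Bigm|\sum_{i}|\psi_i\rangle\langle\psi_i|\Bigm|k\Bigr\rangle=\langle l|k\rangle=\delta_{lk},
\end{equation*}
which yields $\sum_i W_i\rho W_i^*=\sum_{k=1}^{d}V_k\rho V_k^*=\Phi(\rho)$. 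A completely analogous computation gives $\sum_i W_i^*W_i=\sum_{k,l}\langle k|\psi_i\rangle\langle\psi_i|l\rangle V_k^*V_l=\sum_k V_k^*V_k=I_{\H_A}$, so $\{W_i\}$ is a genuine Kraus family.

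The only nonroutine point is justification of the interchange of summations when $d=\infty$ or the system $\{|\psi_i\rangle\}$ is infinite. I would handle this by noting that for every $\rho\in\T(\H_A)$ the partial sums $\sum_{k=1}^{N}V_k\rho V_k^*$ converge to $\Phi(\rho)$ in trace norm, and the overcompleteness relation converges in the strong operator topology; hence the rearrangement holds when tested against an arbitrary bounded operator, which by duality of $\T$ and $\B$ is enough to conclude trace-norm convergence of $\sum_i W_i\rho W_i^*$ to $\Phi(\rho)$. This convergence subtlety is the only mild obstacle; the algebraic content of the lemma is simply the resolution-of-identity identity applied to the Kraus sum.
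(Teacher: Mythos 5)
Your proof is correct and is essentially identical to the paper's: both expand $\sum_i W_i\rho W_i^*$ into a double sum over $k,l$ and collapse it via the overcompleteness relation $\sum_i|\psi_i\rangle\langle\psi_i|=I_{\H_d}$. The extra remarks on the Kraus normalization $\sum_i W_i^*W_i=I_{\H_A}$ and on the interchange of summations in the infinite case are sensible additions that the paper leaves implicit.
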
\smallskip

\textbf{Proof.} Since $\sum_{i}|\psi_i\rangle\langle
\psi_i|=I_{\H_d}$, we have
$$
\begin{array}{c}
\displaystyle\sum_{i}W_i\rho W^*_i=\sum^d_{k,l=1} V_k\rho V^*_l
\sum_{i}\langle\psi_i|k\rangle\langle
l|\psi_i\rangle\\\displaystyle=\sum^d_{k,l=1} V_k\rho V^*_l
\sum_{i}\Tr |k\rangle\langle
l||\psi_i\rangle\langle\psi_i|=\sum^d_{k=1} V_k\rho V^*_k. \;\square
\end{array}
$$

\subsection{Orthogonal families of pure states}

The second part of Theorem \ref{main} implies the following
criterion of reversibility of a channel with respect to a given
complete family of orthogonal pure states.\smallskip

\begin{property}\label{orth-f}
\emph{Let $\,\Phi:\T(\H_A)\rightarrow\T(\H_B)$ be a quantum channel,
$\,m=\min\{m(\Phi)+1, \dim\H^{\Phi}_B\}$ \footnote{The parameter
$m(\Phi)$ and the subspace $\H^{\Phi}_B$ are defined before Theorem
\ref{main}.} and $\,\S=\{|\varphi_i\rangle\langle\varphi_i|\}$ a
complete family of orthogonal pure states in $\S(\H_A)$. The following
statements are equivalent:}
\begin{enumerate}[(i)]
  \item \emph{the channel $\,\Phi$ is reversible with  respect to the family $\,\S$;}
  \item \emph{$\,\widehat{\Phi}$ is a c-q channel having
  the representation $\,\widehat{\Phi}(\rho)=\displaystyle\sum_{i=1}^{\dim\H_A}\langle\varphi_i|\rho|\varphi_i\rangle\sigma_i$,
   where $\{\sigma_i\}$ is a set of states in $\S(\H_E)$ such that $\,\mathrm{rank}\,\sigma_i\leq m$ for all $\,i$;}

\item \emph{the channel $\,\Phi$ is isometrically equivalent to the
channel
\begin{equation*}
\Phi'(\rho)=\sum_{i,\,j=1}^{\dim\H_A}\langle\varphi_i|\rho|\varphi_j\rangle|\varphi_i\rangle\langle
\varphi_j|
\otimes\sum_{k,\,l=1}^{m}\langle\psi_{jl}|\psi_{ik}\rangle|k\rangle\langle
l|\end{equation*} from $\T(\H_A)$ into $\T(\H_A\otimes\H_m)$, where
$\{|\psi_{ik}\rangle\}$ is a collection of vectors in a separable
Hilbert space such that $\,\sum_{k=1}^{m}\|\psi_{ik}\|^2=1$ and
$\,\langle\psi_{il}|\psi_{ik}\rangle=0$ for all  $\,k\neq l$ for each
$\,i$ and $\{|k\rangle\}_{k=1}^m$ is an orthonormal basis in
$\H_m$.\footnote{Here and in what follows $\H_m$ is either $\,m$-dimensional (if $m<+\infty$) or separable (if $m=+\infty$) Hilbert space.}}
\end{enumerate}
\end{property}

\textbf{Proof.} $\mathrm{(i)\Rightarrow(ii)}$ follows from the
second part of Theorem \ref{main}, since in this case
$|\phi_i\rangle=|\varphi_i\rangle$ for all $i$.
\smallskip

$\mathrm{(ii)\Rightarrow(iii)}$.  If
$\;\sigma_i=\sum_{k=1}^{m}|\psi_{ik}\rangle\langle \psi_{ik}|\;$
then $\widehat{\Phi}(\rho)=\sum_{i,k}W_{ik}\rho\, W^*_{ik}$, where
$W_{ik}= |\psi_{ik}\rangle\langle\varphi_i|$,  and hence
representation (\ref{Kraus-rep-c}) implies
$\widehat{\widehat{\Phi}}=\Phi'$.
\smallskip

$\mathrm{(iii)\Rightarrow(i)}$ follows from Lemma \ref{isom-eq-l},
since $\,\Psi(\cdot)=\Tr_{\H_m}(\cdot)$ is a reversing channel for
the channel $\,\Phi'$ with respect to the family $\S$. $\square$
\medskip

Proposition \ref{orth-f} implies the following criterion for
reversibility of a channel in terms of its dual channel.
\smallskip

\begin{corollary}\label{orth-f-c}
\emph{A channel $\,\Phi:\T(\H_A)\rightarrow\T(\H_B)$ is reversible
with  respect to a complete family
$\{|\varphi_i\rangle\langle\varphi_i|\}$ of orthogonal pure states
in $\,\S(\H_A)$ if and only if there exists a partial isometry
$\,W:\H_A\otimes\H_m\rightarrow\H_B $ such that
\begin{equation}\label{w-cond}
|\varphi_i\rangle\langle\varphi_i|=\Phi^*(W[|\varphi_i\rangle\langle\varphi_i|\otimes
I_{\H_m}] W^*)\quad\forall i,
\end{equation}
where $\,m=\min\{m(\Phi)+1, \dim\H^{\Phi}_B\}$ and $\,\Phi^*:\B(\H_B)\rightarrow\B(\H_A)$ is the dual channel to the channel $\,\Phi$.}
\medskip
\end{corollary}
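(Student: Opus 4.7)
My plan is to derive this corollary directly from the equivalence (i) $\Leftrightarrow$ (iii) of Proposition \ref{orth-f}, by translating isometric equivalence between $\Phi$ and the explicit model channel $\Phi'$ into an identity for the dual channel $\Phi^*$. The central computation will be that $(\Phi')^*(|\varphi_i\rangle\langle\varphi_i|\otimes I_{\H_m}) = |\varphi_i\rangle\langle\varphi_i|$, after which (\ref{w-cond}) will follow by passage to adjoints.

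For the forward direction, I will invoke Proposition \ref{orth-f}(iii) to obtain a partial isometry $U:\H_B\to\H_A\otimes\H_m$ realising the isometric equivalence $\Phi'(\rho) = U\Phi(\rho)U^*$ in the sense of Definition \ref{isom-eq}. Dualising gives $(\Phi')^*(X) = \Phi^*(U^*XU)$, so $W := U^*$ is a partial isometry in the direction required by (\ref{w-cond}) and satisfies $\Phi^*(WXW^*) = (\Phi')^*(X)$ for all $X\in\B(\H_A\otimes\H_m)$. I will then compute $(\Phi')^*(|\varphi_i\rangle\langle\varphi_i|\otimes I_{\H_m})$ directly from the explicit formula for $\Phi'$ in Proposition \ref{orth-f}(iii): pairing with an arbitrary $\rho\in\T(\H_A)$, orthonormality of the $|\varphi_j\rangle$ kills all cross terms, and the diagonal contribution reduces to $\langle\varphi_i|\rho|\varphi_i\rangle\sum_k\|\psi_{ik}\|^2 = \langle\varphi_i|\rho|\varphi_i\rangle$. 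Substituting $X = |\varphi_i\rangle\langle\varphi_i|\otimes I_{\H_m}$ into the identity $\Phi^*(WXW^*)=(\Phi')^*(X)$ yields (\ref{w-cond}).

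For the converse I will build an explicit measure-and-prepare reversing channel from $W$. Setting $E_i := W[|\varphi_i\rangle\langle\varphi_i|\otimes I_{\H_m}]W^*$, the projections $|\varphi_i\rangle\langle\varphi_i|\otimes I_{\H_m}$ are mutually orthogonal and sum to $I_{\H_A\otimes\H_m}$; combined with the partial-isometry property of $W$, this makes $\{E_i\}$ a family of mutually orthogonal positive operators with $\sum_iE_i = WW^*$, so $\{E_i\}\cup\{E_0\}$ with $E_0 := I_{\H_B}-WW^*$ is a POVM. For any fixed $\tau\in\S(\H_A)$ the c-q channel
\begin{equation*}
\Psi(\sigma) = \Tr[E_0\sigma]\tau + \sum_i\Tr[E_i\sigma]|\varphi_i\rangle\langle\varphi_i|,\qquad \sigma\in\T(\H_B),
\end{equation*}
then satisfies $\Tr[E_i\Phi(|\varphi_j\rangle\langle\varphi_j|)] = \langle\varphi_j|\Phi^*(E_i)|\varphi_j\rangle = \delta_{ij}$ by (\ref{w-cond}) and the orthonormality of the $|\varphi_j\rangle$, so $\Psi\circ\Phi(|\varphi_j\rangle\langle\varphi_j|) = |\varphi_j\rangle\langle\varphi_j|$ for every $j$. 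The converse is thus a short verification; the only real subtlety lies in the forward direction, namely aligning the orientation of the partial isometry: Definition \ref{isom-eq} naturally produces $U:\H_B\to\H_A\otimes\H_m$ while the corollary is stated with $W:\H_A\otimes\H_m\to\H_B$, and the two are interchanged by passage to the adjoint, using that the adjoint of a partial isometry is itself a partial isometry.
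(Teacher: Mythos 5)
Your proof is correct, but your converse takes a genuinely different route from the paper's. The forward direction coincides with the paper's (which simply says it ``directly follows from Proposition \ref{orth-f}''): you make explicit the dual identity $(\Phi')^*(X)=\Phi^*(U^*XU)$ and the computation $(\Phi')^*(|\varphi_i\rangle\langle\varphi_i|\otimes I_{\H_m})=|\varphi_i\rangle\langle\varphi_i|$, which is exactly what is implicitly used. For sufficiency, however, the paper forms the channel $\Phi'(\rho)=W^*\Phi(\rho)W$ into $\T(\H_A\otimes\H_m)$, uses the fact that (\ref{w-cond}) forces $\Phi^*(WW^*)=I_{\H_A}$ (so that $WW^*\Phi(\rho)WW^*=\Phi(\rho)$ and $\Phi\sim\Phi'$ in the sense of Definition \ref{isom-eq}), and then shows via the trace identity $\Tr[(|\varphi_i\rangle\langle\varphi_i|\otimes I_{\H_m})\Phi'(|\varphi_j\rangle\langle\varphi_j|)]=\delta_{ij}$ that $\Tr_{\H_m}$ reverses $\Phi'$, invoking Lemma \ref{isom-eq-l} to transfer reversibility back to $\Phi$. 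You instead build a measure-and-prepare reversing channel directly on $\H_B$ from the POVM $\{E_i\}\cup\{I_{\H_B}-WW^*\}$ with $E_i=W[|\varphi_i\rangle\langle\varphi_i|\otimes I_{\H_m}]W^*$; this is more elementary and self-contained (it bypasses Lemma \ref{isom-eq-l} and the support argument entirely), whereas the paper's route exhibits the reversing channel in the structural form ``conjugate by $W^*$, then partial trace,'' consistent with statement (iii) of Proposition \ref{orth-f}. One small inaccuracy in your write-up: the operators $E_i$ need not be mutually orthogonal, since $W^*W$ need not commute with $|\varphi_i\rangle\langle\varphi_i|\otimes I_{\H_m}$; but this claim is never used --- all you need is $E_i\geq 0$ and $\sum_iE_i=WW^*\leq I_{\H_B}$, which do hold, so the argument stands.
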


Note that  condition (\ref{w-cond}) implies $\Phi^*(WW^*)=I_{\H_A}$
and hence $WW^*$ is the projector on the subspace containing
supports of all states $\Phi(\rho)$, $\rho\in\S(\H_A)$.

\smallskip

\textbf{Proof.} Necessity of condition (\ref{w-cond}) directly
follows from Proposition \ref{orth-f}.

To prove its sufficiency consider the channel
$\Phi'(\rho)=W^*\Phi(\rho)W$ from $\T(\H_A)$ into
$\T(\H_A\otimes\H_m)$. By the remark after Corollary \ref{orth-f-c}
$$
W\Phi'(\rho)W^*=WW^*\Phi(\rho)WW^*=\Phi(\rho),\quad \rho\in\T(\H_A),
$$
and hence the channels $\Phi$ and $\Phi'$ are isometrically
equivalent. By Lemma \ref{isom-eq-l} it suffices to show reversibility
of the channel $\Phi'$ with respect to the family
$\{|\varphi_i\rangle\langle\varphi_i|\}$.\smallskip

Condition (\ref{w-cond}) implies
$$
\Tr\,[|\varphi_i\rangle\langle\varphi_i|\otimes
I_{\H_m}]\,\Phi'(|\varphi_j\rangle\langle\varphi_j|)=\Tr\,
\Phi^*(W[|\varphi_i\rangle\langle\varphi_i|\otimes I_{\H_m}]
W^*)\,|\varphi_j\rangle\langle\varphi_j|=\delta_{ij}.
$$
It follows that the support of the state
$\Phi'(|\varphi_i\rangle\langle\varphi_i|)$ belongs to the subspace
$\{\lambda|\varphi_i\rangle\}\otimes\H_m$ and hence
$\Tr_{\H_m}\Phi'(|\varphi_i\rangle\langle\varphi_i|)=|\varphi_i\rangle\langle\varphi_i|$
for all $i$. $\square$

\subsection{Arbitrary families of pure states}

In this section we consider a structure of a quantum channel reversible with respect
to an arbitrary complete family $\S=\{|\varphi_{\lambda}\rangle\langle\varphi_{\lambda}|\}_{\lambda\in\Lambda}$ of pure states.\smallskip

It is known that a channel $\Phi:\T(\H_A)\rightarrow\T(\H_B)$ is
reversible with respect to the family of all pure states in
$\S(\H_A)$ (which means that it is reversible with respect to
$\S(\H_A)$) if and only if its complementary channel is completely
depolarizing, i.e. if and only if $\Phi$ is isometrically equivalent
to the channel
\begin{equation}\label{pic}
    \Phi'(\rho)=\rho\otimes\sigma
\end{equation}
from $\T(\H_A)$ into $\T(\H_A\otimes\K)$, where $\K$ is a Hilbert
space and  $\sigma$ is a given state in $\S(\K)$ \cite[Ch.10]{H-SCI}.

We give first a characterization of a family
$\S=\{|\varphi_{\lambda}\rangle\langle\varphi_{\lambda}|\}_{\lambda\in\Lambda}\subset\S(\H_A)$
such that the reversibility of a channel
$\Phi:\T(\H_A)\rightarrow\T(\H_B)$ with respect to $\S$ implies its
reversibility with respect to $\S(\H_A)$.
\smallskip
\begin{definition}\label{orth-dec}
A family $\{|\varphi_{\lambda}\rangle\}_{\lambda\in\Lambda}$ of
vectors in $\H$ (corresp. a family
$\{|\varphi_{\lambda}\rangle\langle\varphi_{\lambda}|\}_{\lambda\in\Lambda}$
of pure states in $\S(\H)$) is called \emph{orthogonally
decomposable} if these is a proper subspace $\H_0\subset\H$ such
that some vectors of this family lie in $\H_0$ and all the others
 -- in $\H^{\perp}_0$.
\end{definition}
\smallskip
Families of pure states, which are not orthogonally decomposable, will be called \emph{orthogonally non-decomposable} (briefly, OND) families.

\medskip
\begin{property}\label{non-orth-1}
\emph{Let
$\,\{|\varphi_{\lambda}\rangle\langle\varphi_{\lambda}|\}_{\lambda\in\Lambda}$ be a complete family of pure states in $\,\S(\H_A)$. The following statements are equivalent:}
\begin{enumerate}[(i)]
  \item \emph{the family $\{|\varphi_{\lambda}\rangle\langle\varphi_{\lambda}|\}_{\lambda\in\Lambda}$
is orthogonally non-decomposable;}
 \item \emph{any channel $\,\Phi:\T(\H_A)\rightarrow\T(\H_B)$ reversible with respect to the family $\{|\varphi_{\lambda}\rangle\langle\varphi_{\lambda}|\}_{\lambda\in\Lambda}$  is isometrically equivalent to channel (\ref{pic}).}
\end{enumerate}
\end{property}

\textbf{Proof.} $\mathrm{(i)\Rightarrow(ii)}\,$ If
$\Psi:\T(\H_B)\rightarrow\T(\H_A)$ is a reversing channel for the
channel $\Phi$ then Lemma \ref{t-c} below shows that
$\Psi\circ\Phi=\id_{\H_A}$. Thus the channel $\Phi$ is reversible
with respect to the set $\S(\H_A)$ and hence its complementary
channel $\widehat{\Phi}$ is completely depolarizing.\smallskip

$\mathrm{(ii)\Rightarrow(i)}\,$ If $\H_0$ is a proper
subspace of $\H_A$ such that  the vector $|\varphi_{\lambda}\rangle$
lies either in $\H_0$ or in $\H^{\perp}_0$ for each $\lambda\in\Lambda$
then the channel $\rho\mapsto P_0\rho P_0+\bar{P}_0\rho \bar{P}_0$, where $P_0$ is the projector on $\H_0$ and $\bar{P}_0=I_{\H_A}-P_0$, is obviously reversible with respect to the family $\{|\varphi_{\lambda}\rangle\langle\varphi_{\lambda}|\}_{\lambda\in\Lambda}$. $\square$
\smallskip
\begin{lemma}\label{t-c}
\emph{Let $\,\Phi:\T(\H)\rightarrow\T(\H)$ be a quantum channel
($\dim\H\leq+\infty$) and
$\{|\varphi_{\lambda}\rangle\langle\varphi_{\lambda}|\}_{\lambda\in\Lambda}$
be an orthogonally non-decomposable family of pure states in
$\,\S(\H)$. If
$\;\Phi(|\varphi_{\lambda}\rangle\langle\varphi_{\lambda}|)=|\varphi_{\lambda}\rangle\langle\varphi_{\lambda}|\,$
for all $\lambda\in\Lambda$ then $\,\Phi|_{\T(\H_0)}=\id_{\H_0}$,
where $\H_0$ is the subspace generated by the family
$\{|\varphi_{\lambda}\rangle\}_{\lambda\in\Lambda}$.}
\end{lemma}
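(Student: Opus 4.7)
The plan is to fix an arbitrary Kraus representation $\Phi(\rho)=\sum_k V_k\rho V_k^*$ and use it to extract, for each $k$, a single scalar $c_k$ such that $V_k|_{\H_0}=c_k I_{\H_0}$, after which $\Phi|_{\T(\H_0)}=\id_{\H_0}$ is immediate. The starting observation is that $\Phi(|\varphi_\lambda\rangle\langle\varphi_\lambda|)=\sum_k V_k|\varphi_\lambda\rangle\langle\varphi_\lambda|V_k^*$ is assumed to equal the rank-one operator $|\varphi_\lambda\rangle\langle\varphi_\lambda|$; since a sum of positive rank-one operators is rank one only if all its nonzero summands are proportional, every $V_k|\varphi_\lambda\rangle$ must equal $c_{k,\lambda}|\varphi_\lambda\rangle$ for some scalar $c_{k,\lambda}$, and evaluating $\sum_k V_k^*V_k=I_{\H}$ on $|\varphi_\lambda\rangle$ gives $\sum_k|c_{k,\lambda}|^2=1$.

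The key step is to show that $c_{k,\lambda}$ does not depend on $\lambda$. For any pair $\lambda,\mu$ with $\langle\varphi_\lambda|\varphi_\mu\rangle\neq 0$, the same normalization yields
$$
\langle\varphi_\lambda|\varphi_\mu\rangle=\sum_k\langle V_k\varphi_\lambda|V_k\varphi_\mu\rangle=\Big(\sum_k\overline{c_{k,\lambda}}\,c_{k,\mu}\Big)\langle\varphi_\lambda|\varphi_\mu\rangle,
$$
so $\sum_k\overline{c_{k,\lambda}}c_{k,\mu}=1$. Together with $\|(c_{k,\lambda})_k\|_{\ell^2}=\|(c_{k,\mu})_k\|_{\ell^2}=1$, the Cauchy--Schwarz equality case forces the two $\ell^2$-sequences to coincide. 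To propagate this across all of $\Lambda$ I would recast orthogonal non-decomposability graph-theoretically: define a graph on $\Lambda$ by joining $\lambda$ and $\mu$ iff $\langle\varphi_\lambda|\varphi_\mu\rangle\neq 0$, and observe that if $C\subsetneq\Lambda$ were a nonempty union of connected components, then the subspace $\overline{\mathrm{span}}\{|\varphi_\lambda\rangle\}_{\lambda\in C}$ would be orthogonal to every $|\varphi_\mu\rangle$ with $\mu\notin C$, hence a proper subspace of $\H$ realising a decomposition forbidden by Definition~\ref{orth-dec}. Connectedness then chains the pointwise equalities $c_{k,\lambda}=c_{k,\mu}$ into a single sequence $(c_k)_k$.

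It then only remains to conclude. Linearity and boundedness extend $V_k|\varphi\rangle=c_k|\varphi\rangle$ from the linear span of $\{|\varphi_\lambda\rangle\}$ to its closure $\H_0$, so $V_kP_0=c_kP_0$ with $P_0$ the orthogonal projection on $\H_0$, and taking adjoints $P_0V_k^*=\overline{c_k}P_0$. For any $\rho\in\T(\H_0)$ one has $P_0\rho=\rho P_0=\rho$, hence $V_k\rho V_k^*=|c_k|^2\rho$; summing the Kraus series (which converges in trace norm) and using $\sum_k|c_k|^2=1$ gives $\Phi(\rho)=\rho$. The only genuinely non-routine point I anticipate is the middle step: propagating the pairwise coincidence $c_{k,\lambda}=c_{k,\mu}$ over a possibly uncountable orthogonality graph, for which the orthogonally non-decomposable hypothesis is exactly what is needed.
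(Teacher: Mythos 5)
Your argument is correct and is essentially the paper's own proof rewritten in Kraus form: your scalars $c_{k,\lambda}$ are exactly the coordinates of the ancilla vectors in the paper's Stinespring-based proof (which derives $V|\varphi_\lambda\rangle=|\varphi_\lambda\rangle\otimes|\psi_\lambda\rangle$ and then $\langle\varphi_\lambda|\varphi_\mu\rangle\neq0\Rightarrow\langle\psi_\lambda|\psi_\mu\rangle=1\Rightarrow|\psi_\lambda\rangle=|\psi_\mu\rangle$), and your Cauchy--Schwarz equality step is the same observation. The only divergence is presentational: you propagate the equality over all of $\Lambda$ by finite paths in the non-orthogonality graph, while the paper first extracts a countable OND subfamily generating $\H_0$ via Zorn's lemma (Lemma~\ref{app}) and then runs the same decomposability contradiction; both are valid.
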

\smallskip
\textbf{Proof.} Let $\Phi(\rho)=\Tr_{\K}V\rho V^*$ be the
Stinespring representation of the channel $\Phi$, where $V$ is an
isometry from $\H$ into $\H\otimes\K$.

By using the standard argumentation based on Zorn's lemma one can show that any complete OND  family of pure states contains a countable complete OND subfamily (Lemma \ref{app} in Appendix 5.2).

Let $\{|\varphi_{i}\rangle\langle\varphi_{i}|\}$ be a countable OND subfamily of
$\{|\varphi_{\lambda}\rangle\langle\varphi_{\lambda}|\}_{\lambda\in\Lambda}$ such that the family $\{|\varphi_{i}\rangle\}$ generates the subspace $\H_0$. The condition of the lemma
implies
$$
V|\varphi_i\rangle=|\varphi_i\rangle\otimes|\psi_i\rangle,\quad
\forall i,
$$
where $\{|\psi_i\rangle\}$ is a family of unit vectors in $\K$.
Since $V$ is an isometry, we have
$$
\langle\varphi_i|\varphi_j\rangle=\langle V\varphi_i|V\varphi_j\rangle=\langle\varphi_i|\varphi_j\rangle\langle\psi_i|\psi_j\rangle,\quad
\forall i,j
$$
and hence
$\langle\varphi_i|\varphi_j\rangle\neq0\;\Rightarrow\;\langle\psi_i|\psi_j\rangle=1$.

It follows that  $|\psi_i\rangle=|\psi_j\rangle$ for all $i,j$.
Indeed, if there exist index sets $I$ and $J$ such that
$|\psi_i\rangle\neq|\psi_j\rangle$ for all $i\in I,j\in J$ then the
above implication shows that $\langle\varphi_i|\varphi_j\rangle=0$
for all $i\in I,j\in J$ contradicting to the assumed
orthogonal non-decomposability of the family
$\{|\varphi_i\rangle\langle\varphi_i|\}$.

Thus we have
$V|\varphi_i\rangle=|\varphi_i\rangle\otimes|\psi\rangle$ for all
$i$ and hence $V|\varphi\rangle=|\varphi\rangle\otimes|\psi\rangle$
for all $|\varphi\rangle\in\H_0$, since the family
$\{|\varphi_i\rangle\}$ generates the subspace $\H_0$. It follows
that $\Phi(\rho)=\rho$ for all $\rho\in\T(\H_0)$. $\square$
\medskip

In analysis of reversibility of a channel with respect to orthogonally decomposable families of pure states the following simple observation plays an essential role.
\smallskip
\begin{lemma}\label{nod-dec}
\emph{An arbitrary family $\,\S$ of pure states in $\,\S(\H)$ can be
decomposed as follows $\,\S=\bigcup_k\S_k$, where $\{\S_k\}$ is a
finite or countable collection of OND disjoint subfamilies of $\,\S$
such that $\rho\perp\rho'$ for all $\rho\in\S_k,\rho'\in\S_{k'}$,
$k\neq k'$. This decomposition is unique (up to permutation of the
subfamilies).}
\end{lemma}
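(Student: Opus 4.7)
The plan is to define the decomposition via an equivalence relation on $\S$. Set $\rho \sim \rho'$ if there exists a finite chain $\rho = \rho_0, \rho_1, \ldots, \rho_n = \rho'$ of elements of $\S$ such that every two consecutive states $\rho_{i-1}, \rho_i$ are non-orthogonal. This is reflexive (trivial chain), symmetric (reverse the chain) and transitive (concatenate chains), so it is an equivalence relation on $\S$. Let $\{\S_k\}$ denote the collection of its equivalence classes. That $\rho \perp \rho'$ whenever $\rho \in \S_k$, $\rho' \in \S_{k'}$ with $k \neq k'$ is immediate: a non-orthogonal such pair would form a one-link chain and hence lie in the same class.

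Next I show each $\S_k$ is OND and that the family $\{\S_k\}$ is at most countable. For the OND property, suppose for contradiction that for some $k$ one has a proper subspace $\H_0 \subsetneq \H$ and a splitting $\S_k = A \sqcup B$ with $A \subset \H_0$, $B \subset \H_0^\perp$, and both $A$ and $B$ nonempty. Picking $\rho \in A$, $\rho' \in B$ and a non-orthogonal chain joining them — whose elements all lie in $\S_k$ by the $\sim$-equivalence — the chain must cross from $A$ to $B$ at some step; the endpoints of that step are simultaneously non-orthogonal (by construction) and orthogonal (one in $\H_0$, the other in $\H_0^\perp$), a contradiction. For countability, the closed linear spans of distinct $\S_k$ form a family of pairwise orthogonal nonzero closed subspaces of $\H$, and a separable (resp.\ finite-dimensional) Hilbert space admits at most countably (resp.\ finitely) many such subspaces.

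For uniqueness, let $\{\T_j\}$ be any other decomposition satisfying the conditions of the lemma. Fix $\rho \in \T_j$ and let $\S_k$ be its $\sim$-class; I prove $\T_j = \S_k$ by double inclusion. The inclusion $\S_k \subseteq \T_j$ follows by traversing any non-orthogonal chain from $\rho$ to an arbitrary $\rho' \in \S_k$: the mutual orthogonality of distinct $\T$-blocks forces each consecutive non-orthogonal pair to remain in the same block, so an induction keeps the chain inside $\T_j$. The reverse inclusion $\T_j \subseteq \S_k$ reduces to showing that any two elements of $\T_j$ can be connected by a non-orthogonal chain lying entirely in $\T_j$; if this failed, the equivalence classes of chains-within-$\T_j$ would partition $\T_j$ into at least two nonempty pairwise orthogonal pieces, and taking $\H_0$ to be the closed linear span of one such piece would exhibit $\T_j$ as orthogonally decomposable (with $\H_0$ proper in $\H$ since the remaining piece supplies vectors in $\H_0^\perp$), contradicting that $\T_j$ is OND. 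The main obstacle is precisely this last point, where one must read Definition \ref{orth-dec} carefully: both parts of an orthogonal decomposition must be nonempty, so that the candidate $\H_0$ built from one piece is genuinely a proper subspace of $\H$.
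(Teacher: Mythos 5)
Your proof is correct and is essentially the paper's argument: your chain-connectivity equivalence class of $\rho$ is exactly the set $\C^{\rho}_{*}=\bigcup_n\C^{\rho}_n$ that the paper builds iteratively, and the orthogonality, OND, and countability arguments match. You are somewhat more explicit than the paper on the uniqueness claim and on the reading of Definition \ref{orth-dec} (both parts of an orthogonal decomposition being nonempty), but these are elaborations of the same route rather than a different one.
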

\smallskip

\textbf{Proof.} For given $\rho\in\S$ consider the monotone sequence $\{\C^{\rho}_n\}$ of
subfamilies of $\S$ constructed as follows. Let
$\C^{\rho}_1=\{\rho\}$, $\C_2$ be the family of all
states from $\S$ non-orthogonal to $\rho$,
$\C_{n+1}$ be the family of all states from $\S$ non-orthogonal to at
least one state from $\C_n$, $n=2,3,...$. Let $\C^{\rho}_*=\bigcup_n\C^{\rho}_n$. It is easy to
verify by induction that $\C^{\rho}_n$ is an OND family for each $n$ and hence
$\C^{\rho}_*$ is an OND family. Note that any state in $\C^{\rho}_*$ is orthogonal to
any state in $\S\setminus\C^{\rho}_*$. Indeed, if $\rho\in\C^{\rho}_*$
then $\rho\in\C^{\rho}_n$ for some $n$. So, if a pure state $\sigma$ is
not orthogonal to $\rho$ then it belongs to $\C^{\rho}_{n+1}\subseteq\C^{\rho}_*$.

It is easy to see that the families $\C^{\rho}_*$ and
$\C^{\rho'}_*$, $\rho,\rho'\in\S$,  either coincide or have an empty intersection.
Since the Hilbert space $\H$ is separable and each family
$\C^{\rho}_*$ occupies a nontrivial subspace of $\H$, the collection $\{\C^{\rho}_*\}_{\rho\in\S}$ contains either a finite or countable number of different families. These families form the required decomposition. $\square$
\medskip

The above decomposition of a complete family $\S$ of pure states provides a
description of the class of all channels reversible with respect to
$\S$.
\smallskip

\begin{theorem}\label{non-orth-2}
\emph{Let $\,\Phi:\T(\H_A)\rightarrow\T(\H_B)$ be a quantum channel,
$\,m=\min\{m(\Phi)+1, \dim\H^{\Phi}_B\}$ \footnote{The parameter
$m(\Phi)$ and the subspace $\H^{\Phi}_B$ are defined before Theorem
\ref{main}.} and $\,\S$ a complete family of pure states in
$\S(\H_A)$. Let $\,\S=\bigcup_{k=1}^n\S_k$, $n\leq\dim\H_A$, be a
decomposition of $\,\S$ into OND subfamilies (from Lemma
\ref{nod-dec}) and $\,P_k$ -- the projector on the subspace
generated by the states in $\,\S_k$. The following statements are
equivalent:}
\begin{enumerate}[(i)]
  \item \emph{the channel $\,\Phi$ is reversible with respect to the family $\,\S$;}
  \item \emph{the channel $\,\Phi$ is reversible with respect to the family}
  $$
  \hat{\S}=\left\{\rho\in\S(\H_A)\,\left|\;\rho=\sum_{k=1}^n P_k\rho P_k\right.\right\};
  $$\vspace{-15pt}
  \item \emph{$\,\widehat{\Phi}$ is a c-q channel having the representation
  $\,\widehat{\Phi}(\rho)=\displaystyle\sum_{k=1}^n[\Tr P_k\rho]\sigma_k$, where $\{\sigma_k\}$ is
  a set of states in $\S(\H_E)$ such that $\,\mathrm{rank}\sigma_k\!\leq m$ for all $\,k;$}
  \item \emph{the channel $\,\Phi$ is isometrically equivalent to the
  channel}
$$
\Phi'(\rho)=\sum_{k,l=1}^nP_k\rho
P_l\otimes\sum_{p,t=1}^m\langle\psi_t^l|\psi_p^k\rangle
|p\rangle\langle t|\vspace{-5pt}
$$
\emph{from $\,\T(\H_A)$ into $\,\T(\H_A\otimes\H_m)$, where
$\{|\psi_{p}^k\rangle\}$ is a collection of vectors  in a separable
Hilbert space such that $\,\sum_{p=1}^m\|\psi_{p}^k\|^2=1$ and
$\langle\psi_{t}^k|\psi_{p}^k\rangle=0$ for all $\,p\neq t$ for each
 $k$ and $\,\{|p\rangle\}_{p=1}^m$ is an orthonormal basis in $\,\H_m$.}
\end{enumerate}
\end{theorem}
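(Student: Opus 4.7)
I would prove the cycle $(\mathrm{i})\!\Rightarrow\!(\mathrm{ii})\!\Rightarrow\!(\mathrm{iii})\!\Rightarrow\!(\mathrm{iv})\!\Rightarrow\!(\mathrm{i})$. For $(\mathrm{i})\!\Rightarrow\!(\mathrm{ii})$, pick a reversing channel $\Psi$ for $\Phi$ on $\S$; then $\Psi\circ\Phi$ fixes every state in each OND subfamily $\S_k$, so Lemma \ref{t-c} applied in each block yields $(\Psi\circ\Phi)|_{\T(P_k\H_A)}=\id_{P_k\H_A}$. Since every $\rho\in\hat{\S}$ is block-diagonal, $\rho=\sum_k P_k\rho P_k$ with $P_k\rho P_k\in\T(P_k\H_A)$, linearity propagates fixity to all of $\hat{\S}$.

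The principal step is $(\mathrm{ii})\!\Rightarrow\!(\mathrm{iii})$, combining two observations. First, $\S(P_k\H_A)\subseteq\hat{\S}$, so the restriction $\Phi_k=\Phi|_{\T(P_k\H_A)}$ is reversible with respect to its whole state space; by the characterization recalled before Definition \ref{orth-dec} its complementary channel is completely depolarizing, and unfolding the isometric equivalence between the minimal complementary of $\Phi_k$ and the one obtained from the restriction $V|_{P_k\H_A}$ of the Stinespring of $\Phi$ produces a fixed state $\sigma_k\in\S(\H_E)$ with $\widehat{\Phi}(\rho)=\sigma_k\Tr\rho$ for every $\rho\in\T(P_k\H_A)$. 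Second, an orthonormal basis $\{|e_{k,j}\rangle\}_{k,j}$ of $\H_A$ adapted to $\H_A=\bigoplus_k P_k\H_A$ is a complete orthogonal family in $\hat{\S}$, so Proposition \ref{orth-f} gives $\widehat{\Phi}(\rho)=\sum_{k,j}\langle e_{k,j}|\rho|e_{k,j}\rangle\sigma_{k,j}$ with $\mathrm{rank}\,\sigma_{k,j}\le m$. Specialising to $\rho=|e_{k,j}\rangle\langle e_{k,j}|$ and comparing with the first observation forces $\sigma_{k,j}=\sigma_k$ for every $j$, yielding $\widehat{\Phi}(\rho)=\sum_k(\Tr P_k\rho)\sigma_k$ with $\mathrm{rank}\,\sigma_k\le m$.

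For $(\mathrm{iii})\!\Rightarrow\!(\mathrm{iv})$, the spectral decompositions $\sigma_k=\sum_p|\psi_p^k\rangle\langle\psi_p^k|$ (orthogonal within each $k$, with $\sum_p\|\psi_p^k\|^2=1$) yield the Kraus representation $\widehat{\Phi}(\rho)=\sum_{k,j,p}W_{k,j,p}\rho W_{k,j,p}^*$ with $W_{k,j,p}=|\psi_p^k\rangle\langle e_{k,j}|$. Applying formula (\ref{Kraus-rep-c}) to this Kraus family with environment basis $\{|k,j,p\rangle\}$ and identifying $|k,j,p\rangle\leftrightarrow|e_{k,j}\rangle\otimes|p\rangle$ gives
$$\widehat{\widehat{\Phi}}(\rho)=\sum_{k,l}P_k\rho P_l\otimes\sum_{p,t}\langle\psi_t^l|\psi_p^k\rangle|p\rangle\langle t|=\Phi'(\rho),$$
and since $\widehat{\widehat{\Phi}}$ is isometrically equivalent to $\Phi$, so is $\Phi'$. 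Finally, for $(\mathrm{iv})\!\Rightarrow\!(\mathrm{i})$, if $\rho=|\varphi_\lambda\rangle\langle\varphi_\lambda|\in\S_k$ then only the diagonal block $P_k\rho P_k=\rho$ survives in $\Phi'(\rho)$, and the relations $\langle\psi_t^k|\psi_p^k\rangle=\|\psi_p^k\|^2\delta_{pt}$ together with $\sum_p\|\psi_p^k\|^2=1$ give $\Tr_{\H_m}\Phi'(\rho)=\rho$; hence $\Tr_{\H_m}$ reverses $\Phi'$ on $\S$, and Lemma \ref{isom-eq-l} transfers reversibility to $\Phi$. The main technical hurdle is producing the single environment state $\sigma_k\in\S(\H_E)$ in the first observation of $(\mathrm{ii})\!\Rightarrow\!(\mathrm{iii})$, where the minimal complementary of $\Phi_k$ must be realised inside the fixed environment of $\widehat{\Phi}$; once this is done, the remaining implications are routine applications of previously established results.
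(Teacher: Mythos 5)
Your proposal is correct and follows essentially the same route as the paper: the same cycle $(\mathrm{i})\Rightarrow(\mathrm{ii})\Rightarrow(\mathrm{iii})\Rightarrow(\mathrm{iv})\Rightarrow(\mathrm{i})$, with Lemma \ref{t-c} for the first step, the combination of Proposition \ref{orth-f} on a block-adapted orthonormal basis with the completely-depolarizing property of $\widehat{\Phi}$ on each $\T(P_k\H_A)$ (the paper invokes Proposition \ref{non-orth-1} for the latter, which is the same characterization you cite) for the second, formula (\ref{Kraus-rep-c}) for the third, and $\Tr_{\H_m}$ plus Lemma \ref{isom-eq-l} for the last. The "technical hurdle" you flag — realising the depolarizing complementary of $\Phi|_{\T(P_k\H_A)}$ inside the fixed environment $\H_E$ — is handled in the paper by exactly the identification you describe, so no genuine gap remains.
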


\textbf{Proof.}  $\mathrm{(i)\Rightarrow(ii)}$. Let $\Psi$ be a
channel such that $\Psi(\Phi(\rho))=\rho$ for all $\rho\in\S$. Let
$\H_k$ be the subspace of $\H$ generated by the vectors
corresponding to the subfamily $\S_k$. Since $\S_k$ is an OND
family, Lemma \ref{t-c} shows that
$\Psi\circ\Phi|_{\T(\H_k)}=\id_{\H_k}$ for each $k$. \smallskip

$\mathrm{(ii)\Rightarrow(iii)}$. Let $\{|\phi_i\rangle\}$ be an
orthonormal basis corresponding to the decomposition
$\H_A=\oplus_k\H_k$, i.e. each $|\phi_i\rangle$ lies in some $\H_k$.
Let $I_k$ be the set of all $i$ such that $|\phi_i\rangle\in\H_k$.
Since $|\phi_i\rangle\langle\phi_i|\in\hat{\S}$ for all $i$, the
channel $\Phi$ is reversible with respect to the family
$\{|\phi_i\rangle\langle\phi_i|\}$. By Proposition \ref{orth-f} we
have
$$
\widehat{\Phi}(\rho)=\sum_{k}\sum_{i\in
I_k}\langle\phi_i|\rho|\phi_i\rangle \sigma_i,
$$
where $\{\sigma_i\}$ is a set of states in $\,\S(\H_E)$ such that
$\,\mathrm{rank}\,\sigma_i\leq m$ for all $i$. Since $\S_k$ is an OND
family, Proposition \ref{non-orth-1} shows that the restriction of
the channel $\widehat{\Phi}$ to the set $\T(\H_k)$ is a completely
depolarizing channel. Hence $\sigma_i=\bar{\sigma}_k$ for all $i\in
I_k$. Thus
$\widehat{\Phi}(\rho)=\sum_{k}[\Tr P_k\rho]\bar{\sigma}_k$. \medskip

$\mathrm{(iii)\Rightarrow(iv)}$. Let $k(i)$ be the index of the set
$I_k$ containing $i$, i.e. $i\in I_{k(i)}$ for all $i$. If
$\;\sigma_k=\sum_{p=1}^{m}|\psi^k_{p}\rangle\langle \psi^k_{p}|\;$
then $\widehat{\Phi}(\rho)=\sum_{i,p}W_{ip}\rho\, W^*_{ip}$, where
$W_{ip}= |\psi^{k(i)}_{p}\rangle\langle\phi_i|$, and hence
representation (\ref{Kraus-rep-c}) implies
$$
\begin{array}{c}
\displaystyle\widehat{\widehat{\Phi}}(\rho)=\sum_{i,j,p,t}[\Tr
W_{ip}\rho\,
W^*_{jt}]|\phi_i\rangle\langle \phi_j|\otimes|p\rangle\langle t|=\\
\displaystyle\sum_{k,l,p,t}\,\sum_{i\in I_k,j\in I_l}
\langle\phi_i|\rho|\phi_j\rangle|\phi_i\rangle\langle
\phi_j|\otimes \langle\psi_t^l|\psi_p^k\rangle |p\rangle\langle
t|=\sum_{k,l}P_k\rho
P_l\otimes\sum_{p,t}\langle\psi_t^l|\psi_p^k\rangle
|p\rangle\langle t|,
\end{array}
$$
where $\,\{|p\rangle\}$ is an orthonormal basis in $\H_m$.

$\mathrm{(iv)\Rightarrow(i)}$ follows from Lemma \ref{isom-eq-l},
since $\,\Psi(\cdot)=\Tr_{\H_m}(\cdot)$ is a reversing channel for
the channel $\,\Phi'$ with respect to the family $\S$. $\square$

\medskip
Theorem \ref{non-orth-2} implies the following useful observation.
\smallskip

\begin{corollary}\label{non-orth-2-c}
\emph{If a channel $\,\Phi:\T(\H_A)\rightarrow\T(\H_B)$  is
reversible with respect to a complete family $\,\S$ of pure states
in $\,\S(\H_A)$ then it is reversible with respect to a particular
complete family of orthogonal pure states in $\,\S(\H_A)$ and hence $\,\dim\H_A\leq\dim\H_B$.}
\end{corollary}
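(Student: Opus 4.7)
The plan is to extract, from the OND decomposition produced by Theorem \ref{non-orth-2}, a complete orthogonal family of pure states still lying in $\hat{\S}$, and then apply Proposition \ref{orth-f} to read off the dimensional bound.

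First I would invoke the equivalence $\mathrm{(i)}\Leftrightarrow\mathrm{(ii)}$ of Theorem \ref{non-orth-2}: reversibility with respect to $\S$ is the same as reversibility with respect to the larger family $\hat{\S}$ of all block-diagonal states relative to $\H_A=\bigoplus_{k=1}^n\H_k$, where $\H_k$ is the range of $P_k$. Since $\S$ is complete, the linear hull of the vectors $|\varphi_\lambda\rangle$ is dense in $\H_A$; as each such vector lies in some $\H_k$ and $\bigoplus_{k}\H_k$ is a closed subspace of $\H_A$, this forces $\bigoplus_{k}\H_k=\H_A$. I can therefore pick an orthonormal basis $\{|\phi_i\rangle\}_{i=1}^{\dim\H_A}$ of $\H_A$ adapted to this decomposition, i.e.\ each $|\phi_i\rangle$ lies entirely in some $\H_{k(i)}$. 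Then $|\phi_i\rangle\langle\phi_i|=\sum_k P_k|\phi_i\rangle\langle\phi_i|P_k$, so $|\phi_i\rangle\langle\phi_i|\in\hat{\S}$, and the set $\{|\phi_i\rangle\langle\phi_i|\}$ is a complete orthogonal family of pure states in $\S(\H_A)$ with respect to which $\Phi$ is reversible.

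For the dimension inequality I would apply Proposition \ref{orth-f}(iii) to this orthogonal family. It produces a channel $\Phi'$ isometrically equivalent to $\Phi$ whose values on the basis states take the form
$$
\Phi'(|\phi_i\rangle\langle\phi_i|)=|\phi_i\rangle\langle\phi_i|\otimes\sigma_i,\qquad \sigma_i\in\S(\H_m),
$$
so the supports $\{|\phi_i\rangle\}\otimes\mathrm{supp}\,\sigma_i\subset\H_A\otimes\H_m$ are mutually orthogonal nonzero subspaces. Hence $\dim\H^{\Phi'}_{A\otimes m}\geq\dim\H_A$, and since isometric equivalence identifies $\H^{\Phi'}_{A\otimes m}$ with a subspace of $\H_B$, we conclude $\dim\H_B\geq\dim\H_A$.

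No real obstacle arises; the only point requiring a moment of care is verifying that the closed orthogonal sum $\bigoplus_k \H_k$ actually exhausts $\H_A$ (needed to guarantee that an adapted basis is a basis of all of $\H_A$), which follows immediately from completeness of $\S$ together with the fact that each $|\varphi_\lambda\rangle$ lies in one of the $\H_k$.
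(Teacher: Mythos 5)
Your proposal is correct and follows essentially the same route the paper intends: the corollary is stated as an immediate consequence of Theorem \ref{non-orth-2}, and the argument you give (an orthonormal basis adapted to the decomposition $\H_A=\bigoplus_k\H_k$ lies in $\hat{\S}$, so $\mathrm{(i)}\Leftrightarrow\mathrm{(ii)}$ yields reversibility with respect to that orthogonal family) is exactly the one already embedded in the paper's proof of $\mathrm{(ii)}\Rightarrow\mathrm{(iii)}$ of that theorem. Your derivation of $\dim\H_A\leq\dim\H_B$ via the mutually orthogonal supports $\{\lambda|\phi_i\rangle\}\otimes\mathrm{supp}\,\sigma_i$ from Proposition \ref{orth-f}(iii) correctly supplies the detail the paper leaves implicit.
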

\smallskip

\begin{remark}\label{non-orth-2-r}
If the  complete family of pure states $\S$ contains a subfamily
$\S_0=\{|\varphi_i\rangle\langle \varphi_i|\}$ such that
$\{|\varphi_i\rangle\}$ is a basis in the space $\H_A$ (in the sense
that an arbitrary vector $|\psi\rangle$ has a unique decomposition
$|\psi\rangle=\sum_i c_i|\varphi_i\rangle$)\footnote{Existence of
the subfamily $\S_0$ is obvious if $\H_A$ is a finite-dimensional space.
The condition showing that a complete countable
family of unit vectors in an infinite-dimensional Hilbert space forms
a basis can be found in \cite[Chapter I]{A&G}.} then the family of
orthogonal pure states mentioned in Corollary \ref{non-orth-2-c} is
explicitly given by Theorem \ref{main}. Indeed, by Lemma \ref{onb}
in Appendix 5.2 the set $\{|\phi_i\rangle\}$ of vectors defined in
(\ref{v-phi-rep}) by means of an arbitrary non-degenerate
probability distribution $\{\pi_i\}$ forms an orthonormal basis in
$\H_A$. By Proposition \ref{orth-f} the channel $\Phi$ is reversible
with respect to the family $\{|\phi_i\rangle\langle \phi_i|\}$.
\end{remark}
\medskip

There are two cases in which the reversibility criterion from Theorem \ref{non-orth-2} is simplified to the limit.

\smallskip

\begin{corollary}\label{non-orth-2-c+}
\emph{Let $\,\Phi:\T(\H_A)\rightarrow\T(\H_B)$ be a quantum channel
satisfying one of the following conditions:
\begin{itemize}
    \item $\ker\Phi^*=\{0\}$.
    \item $\dim\H_A=\dim\H_B<+\infty$,
\end{itemize}
Let $\,\S$ be a complete family of pure states in $\,\S(\H_A)$,
$\,\S=\bigcup_{k=1}^n\S_k$ its decomposition into OND subfamilies
($n\leq\dim\H_A$) and $P_k$ the projector on the subspace generated
by the states in $\,\S_k$. The channel $\,\Phi$ is reversible with
respect to the family $\,\S$ if and only if it is unitary equivalent
to the channel
$$
\Phi'(\rho)=\sum_{k,l=1}^n c_{kl}P_k\rho P_l
$$
from $\T(\H_A)$ into itself, where $\|c_{kl}\|$ is a Gram matrix of
a collection of unit vectors (in the case $\,\ker\Phi^*=\{0\}$ this matrix contains no zeros).}
\end{corollary}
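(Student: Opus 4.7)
The plan is to apply Theorem \ref{non-orth-2} and then refine the resulting isometric equivalence under each of the two hypotheses. By that theorem, $\Phi$ is reversible with respect to $\S$ if and only if it is isometrically equivalent to the channel
$$
\Phi'(\rho)=\sum_{k,l=1}^n P_k\rho P_l\otimes M_{kl},\qquad M_{kl}=\sum_{p,t=1}^m\langle\psi_t^l|\psi_p^k\rangle|p\rangle\langle t|,
$$
with values in $\T(\H_A\otimes\H_m)$, where $\sigma_k=\sum_p|\psi_p^k\rangle\langle\psi_p^k|$ is a state of rank $m_k\leq m$. The key point I want to establish under either hypothesis is that every $m_k$ equals $1$; once this is done, $M_{kl}=c_{kl}|1\rangle\langle 1|$ with $c_{kl}=\langle\psi^l|\psi^k\rangle$, the ancilla $\H_m$ drops out, and $\Phi'$ collapses to $\tilde\Phi(\rho)=\sum_{k,l}c_{kl}P_k\rho P_l$ on $\T(\H_A)$, in which $\|c_{kl}\|$ is by construction the Gram matrix of the unit vectors $\{|\psi^k\rangle\}$.

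For the first case, $\ker\Phi^*=\{0\}$, the projector onto $(\H_B^\Phi)^\perp$ annihilates every $\Phi(\rho)$ and therefore lies in $\ker\Phi^*$; this forces $\H_B^\Phi=\H_B$, hence $m(\Phi)=0$ and $m=\min\{m(\Phi)+1,\dim\H_B^\Phi\}=1$, so each $\sigma_k$ is automatically rank one. For the second case, $\dim\H_A=\dim\H_B<+\infty$, I would first identify the output support of $\Phi'$ as $\bigoplus_k(P_k\H_A\otimes V_k)$, where $V_k=\mathrm{supp}\,\sigma_k$ has dimension $m_k$: each $|\phi_i\rangle\langle\phi_i|$ on an orthonormal basis adapted to $\H_A=\oplus_k\H_k$ yields $\Phi'(|\phi_i\rangle\langle\phi_i|)=|\phi_i\rangle\langle\phi_i|\otimes\sigma_{k(i)}$, while each off-diagonal $M_{kl}$ maps $V_l$ into $V_k$, so no further enlargement is possible. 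The partial isometry of the isometric equivalence then gives $\sum_k\dim(P_k\H_A)\,m_k=\dim\H_{B'}^{\Phi'}=\dim\H_B^\Phi\leq\dim\H_B=\dim\H_A=\sum_k\dim(P_k\H_A)$, and combined with $m_k\geq1$ this forces $m_k=1$ for every $k$.

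After this reduction the equivalence between $\Phi$ and $\tilde\Phi$ is actually unitary: in Case (a) I already have $\H_B^\Phi=\H_B$, and in Case (b) the finite-dimensional equality $\dim\H_B^\Phi=\dim\H_A=\dim\H_B$ gives the same conclusion; on the $\tilde\Phi$-side, $\tilde\Phi(|\phi_i\rangle\langle\phi_i|)=c_{k(i),k(i)}|\phi_i\rangle\langle\phi_i|=|\phi_i\rangle\langle\phi_i|$ (since $c_{kk}=\|\psi^k\|^2=1$) shows that the outputs span $\H_A$, so the partial isometry witnessing the isometric equivalence is indeed a unitary. For the ``no zeros'' addendum in Case (a), a direct computation yields $\tilde\Phi^*(P_{k_0}ZP_{l_0})=c_{l_0k_0}P_{k_0}ZP_{l_0}$, so $c_{l_0k_0}=0$ would put the nonzero subspace $P_{k_0}\B(\H_A)P_{l_0}$ inside $\ker\tilde\Phi^*$ and, via the unitary, inside $\ker\Phi^*$, contradicting the hypothesis. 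The converse direction is immediate: any $\tilde\Phi$ of this form matches the template of Theorem \ref{non-orth-2}(iv) with $m=1$ and is therefore reversible with respect to $\S$, and Lemma \ref{isom-eq-l} transports this back to $\Phi$.

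The main obstacle I anticipate is the dimension-counting step in Case (b); the technical heart is the identification of the output support of $\Phi'$ as $\bigoplus_k P_k\H_A\otimes V_k$, for which one must verify both that every such summand is actually hit and, more delicately, that the off-diagonal contributions $P_k\rho P_l\otimes M_{kl}$ cannot enlarge the support beyond this direct sum, which in turn rests on the fact that $M_{kl}$ is supported on $V_l$ and takes values in $V_k$.
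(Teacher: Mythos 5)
Your proof is correct, and for the first case ($\ker\Phi^*=\{0\}$) it coincides with the paper's: both observe that triviality of $\ker\Phi^*$ forces $\H_B^{\Phi}=\H_B$ (hence $m=1$ and the isometric equivalence is unitary), though you additionally supply the small verification of the ``no zeros'' claim via $\tilde\Phi^*(P_{k_0}ZP_{l_0})=c_{l_0k_0}P_{k_0}ZP_{l_0}$, which the paper leaves implicit. For the second case ($\dim\H_A=\dim\H_B<+\infty$) your route is genuinely different. The paper first invokes Corollary \ref{non-orth-2-c} to pass to a complete orthogonal family $\{\rho_i\}_{i=1}^d$, then computes $\frac1d\sum_i H(\Phi(\rho_i)\|\Phi(\bar\rho))=\log d$ to conclude that the $\Phi(\rho_i)$ are orthogonal pure states and $\Phi$ is unital; purity of the outputs forces purity of the complementary outputs $\widehat\Phi(\rho_i)$, hence $\mathrm{rank}\,\sigma_k=1$ in Theorem \ref{non-orth-2}(iii). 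You instead stay entirely inside the canonical form of Theorem \ref{non-orth-2}(iv) and run a dimension count: identifying $\H_{B'}^{\Phi'}=\bigoplus_k P_k\H_A\otimes V_k$ (your two supporting observations --- that the diagonal terms $|\phi_i\rangle\langle\phi_i|\otimes M_{kk}$ hit each summand, and that $M_{kl}$ maps $V_l$ into $V_k$ so that $\Phi'(\rho)=R\Phi'(\rho)R$ for the projector $R=\sum_kP_k\otimes Q_{V_k}$ --- are both valid) gives $\sum_k\dim(P_k\H_A)\,m_k=\dim\H_B^{\Phi}\le\dim\H_A=\sum_k\dim(P_k\H_A)$, forcing $m_k=1$. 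Your argument is more elementary in that it avoids the relative-entropy computation and the detour through Corollary \ref{non-orth-2-c}; the paper's argument buys the extra structural facts that $\Phi$ is unital and carries an orthonormal basis to an orthonormal basis, which underlie Remark \ref{non-orth-2-c+r}. The only loose phrase is your claim that $\tilde\Phi$ ``matches the template of Theorem \ref{non-orth-2}(iv) with $m=1$'' for the converse --- strictly, (iv) fixes $m$ in terms of $\Phi$ --- but the intended and sufficient point is simply that $\tilde\Phi$ acts as the identity on each $\T(P_k\H_A)$, hence on $\S$, so reversibility transfers to $\Phi$ by Lemma \ref{isom-eq-l}.
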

\smallskip
\textbf{Proof.} If  $\ker\Phi^*=\{0\}$ then $m(\Phi)=1$ and the
assertion of the corollary directly follows from Theorem
\ref{non-orth-2}. We have only to note that in this case
$\H_B^{\Phi}=\H_B^{\Phi'}=\H_B$ and hence isometrical equivalence of
the channels $\Phi$ and $\Phi'$ means their unitary equivalence.

Consider the case $d=\dim\H_A=\dim\H_B<+\infty$. By Corollary
\ref{non-orth-2-c} the reversibility of the channel $\Phi$ with
respect to $\S$ implies its reversibility with respect to some
family $\{\rho_i\}_{i=1}^d$ of orthogonal pure states in $\S(\H_A)$.
Hence
$$
\frac{1}{d}\sum_{i=1}^d
H(\Phi(\rho_i)\|\Phi(\bar{\rho}))=\frac{1}{d}\sum_{i=1}^d
H(\rho_i\|\bar{\rho})=\log d,
$$
where $\bar{\rho}=d^{-1}I_{\H_A}$. It follows that the family
$\{\Phi(\rho_i)\}_{i=1}^d$ consists of orthogonal pure states  and
that $\Phi(I_{\H_A})=I_{\H_B}$.

Hence, by definition of the complementary channel,
$\{\widehat{\Phi}(\rho_i)\}_{i=1}^d$ is a family of pure states and
Theorem \ref{non-orth-2} shows that
$\,\widehat{\Phi}(\rho)=\displaystyle\sum_{k}[\Tr
P_k\rho]|\psi_k\rangle\langle\psi_k|$, where $\{|\psi_k\rangle\}$ is
a set of unit vectors in $\H_E$. If follows that the channel $\Phi$
is isometrically equivalent to the channel
$\widehat{\widehat{\Phi}}=\Phi'$ with
$c_{kl}=\langle\psi_l|\psi_k\rangle$. Since the both channels are
unital, their isometrical equivalence means unitary equivalence.
$\square$ \smallskip

\begin{remark}\label{non-orth-2-c+r}
If one of the conditions of Corollary \ref{non-orth-2-c+r} holds for
a channel $\Phi$ then this channel is reversible with respect to a
complete family $\S$ of pure states if and only if $\Phi(\rho)=U\rho
U^*$ for all $\rho\in\S$, where $U$ is an unitary operator, i.e.
reversibility of the channel $\Phi$ with respect to a complete
family of pure states is equivalent to \emph{preserving} of all
states of the family by this channel (up to unitary transformation).
\end{remark}

\section{Conditions  for preserving the Holevo quantity and their applications}\label{s5}

Consider some applications of the results of Section 3 in quantum
information theory.\smallskip

A finite or countable collection of states $\{\rho_i\}$ with the
corresponding probability distribution $\{\pi_i\}$ is called
\emph{ensemble} and denoted $\{\pi_i,\rho_i\}$. The state
$\bar{\rho}=\sum_i \pi_i\rho_i$ is called the \emph{average state}
of the ensemble $\{\pi_i,\rho_i\}$.\smallskip

The Holevo quantity of an ensemble $\{\pi_i,\rho_i\}$ is defined as
follows
\begin{equation*}
\chi(\{\pi_i,\rho_i\})\doteq\sum_i\pi_i
H(\rho_i\|\bar{\rho})=H(\bar{\rho})-\sum_i\pi_i H(\rho_i),
\end{equation*}
where the second formula is valid under the condition
$H(\bar{\rho})<+\infty$. This value plays a central role in analysis
of different protocols of classical information transmissions by a
quantum channel \cite{H-SCI,N&Ch}.

By monotonicity of the quantum relative entropy we have
\begin{equation}\label{chi-q-m}
\chi(\{\pi_i,\Phi(\rho_i)\})\leq\chi(\{\pi_i,\rho_i\}).
\end{equation}
for an arbitrary quantum channel $\Phi:\T(\H_A)\rightarrow\T(\H_B)$
and any ensemble $\{\pi_i,\rho_i\}$ of states in $\S(\H_A)$.

\begin{remark}\label{ent-g-c}
If $H(\bar{\rho})<+\infty$ and  $H(\Phi(\bar{\rho}))<+\infty$ then
inequality (\ref{chi-q-m}) means convexity of the function
$\rho\mapsto H(\Phi(\rho))-H(\rho)$ -- the entropy gain of the channel
$\Phi$.
\end{remark}\smallskip

By Theorem \ref{P-th} an
equality in (\ref{chi-q-m}) under the condition
$\chi(\{\pi_i,\rho_i\})<+\infty$ is equivalent to reversibility of
the channel $\Phi$ with respect to the family $\{\rho_i\}$. Thus, the results of Section 3 provide conditions of this
equality (which can be interpreted as preserving the Holevo
quantity of the ensemble $\{\pi_i,\rho_i\}$ under the channel $\Phi$).
\smallskip

In  analysis of infinite-dimensional quantum systems and channels it
is necessary to consider \emph{generalized} (or \emph{continuous})
ensembles, defined as  Borel probability measures on the set of
quantum states (from this point of view ensemble
$\{\pi_i,\rho_i\}$ is the purely atomic  measure
$\sum_i\pi_i\delta_{\rho_i}$, where $\delta_{\rho}$ is a Dirac
measure concentrated at a state $\rho$) \cite{H-SCI,H-Sh-2}.
\smallskip

The Holevo quantity of a generalized ensemble (measure) $\mu$ is defined as
follows
\begin{equation}\label{h-q-c}
\chi(\mu)=\int_{\mathfrak{S}(\mathcal{H})}H(\rho\Vert\bar{\rho}(\mu))\mu(d\rho),
\end{equation}
where $\bar{\rho}(\mu)$ is the barycenter of $\,\mu$
defined by the Bochner integral
$$
\bar{\rho}(\mu)=\int_{\mathfrak{S}(\mathcal{H})}\rho \mu(d\rho ).
$$
If $\,H(\bar{\rho}(\mu))<+\infty\,$ then
$\,\chi(\mu)=H(\bar{\rho}(\mu))-\int_{\mathfrak{S}(\mathcal{H})}H(\rho)\mu
(d\rho)\,$ \cite{H-Sh-2}.\smallskip

Denote by $\P(\mathcal{A})$ the set of all Borel probability
measures on a closed subset $\mathcal{A}\subseteq\S(\H)$ endowed with
the weak convergence topology \cite{Par}.\smallskip

The image of a generalized ensemble $\mu\in\P(\S(\H_A))$ under a
channel $\Phi:\T(\H_A)\rightarrow\T(\H_B)$ is a generalized ensemble
$\Phi(\mu)\doteq\mu\circ\Phi^{-1}\in\P(\S(\H_B))$. Its Holevo
quantity can be expressed as follows
\begin{equation}\label{chi-phi-mu}
\begin{array}{c}
\displaystyle\chi(\Phi(\mu))=\int_{\mathfrak{S}(\H_A)}H(\Phi(\rho
)\Vert \Phi (\bar{\rho}(\mu)))\mu(d\rho )\\\displaystyle=H(\Phi
(\bar{\rho}(\mu)))-\int_{\mathfrak{S}(\H_A)}H(\Phi(\rho))\mu
(d\rho),
\end{array}
\end{equation}
where the second formula is valid under the condition $H(\Phi
(\bar{\rho}(\mu)))<+\infty$.\smallskip

Similarly to the discrete case monotonicity of the relative entropy
implies monotonicity of the Holevo quantity for generalized
ensembles:
\begin{equation}\label{chi-d++}
\chi(\Phi(\mu))\leq\chi(\mu).
\end{equation}

Theorem \ref{P-th} implies the following criterion of an equality in
(\ref{chi-d++}), which is a modification of Theorem 3 in \cite{J&P}
(in the case $\mathcal{M}=\B(\H)$).
\smallskip
\begin{property}\label{chi-eq}
\emph{Let $\,\Phi:\T(\H_A)\rightarrow\T(\H_B)$ be a quantum channel
and $\mu$  a generalized ensemble in $\P(\S(\H_A))$ such that
$\,\chi(\mu)<+\infty$. The following statements are equivalent:}
\begin{enumerate}[(i)]
  \item $\chi(\Phi(\mu))=\chi(\mu)$;
  \item \emph{$H(\Phi(\rho)\Vert \Phi (\bar{\rho}(\mu)))=H(\rho
\Vert \bar{\rho}(\mu))$ for $\mu$-almost all $\rho$ in
$\,\S(\H_A)$;}
  \item \emph{$\rho=\Theta_{\bar{\rho}(\mu)}(\Phi(\rho))$ for $\mu$-almost
all $\rho$ in $\,\S(\H_A)$;}
  \item \emph{the channel $\,\Phi$ is reversible with respect to $\mu$-almost
all $\rho$ in $\,\S(\H_A)$.}
\end{enumerate}
\end{property}

In contrast to Theorem 3 in \cite{J&P}, in Proposition \ref{chi-eq}
it is not assumed that the "dominating" state $\bar{\rho}(\mu)$ is a
countable convex mixture of some states from the support of $\mu$.
\medskip

By Proposition \ref{chi-eq} Theorem \ref{main} (with Lemma 2 in \cite{J&P}) and Theorem \ref{non-orth-2} imply the following
conditions for equality in (\ref{chi-d++}).
\smallskip

\begin{theorem}\label{main++} \emph{Let $\,\Phi:\T(\H_A)\rightarrow\T(\H_B)$ be a quantum
channel.} \emph{If there exists an ensemble $\mu\in\P(\S^r)$, where
$\,\S^r=\{\rho\in\S(\H_A)\,|\,\mathrm{rank}\,\rho\leq r\}$, with
full rank average state $\,\bar{\rho}(\mu)$ such that
\begin{equation}\label{chi-nd++}
\chi(\Phi(\mu))=\chi(\mu)<+\infty,
\end{equation}
then the complementary channel $\,\widehat{\Phi}$ has Kraus
representation (\ref{Kraus-rep}) consisting of $\,\leq
n\times\min\{m(\Phi)+r^2, \dim\H^{\Phi}_B\}$ summands \footnote{The parameter
$m(\Phi)$ and the subspace $\H^{\Phi}_B$ are defined before Theorem
\ref{main}.} such that
$\;\mathrm{rank}V_k\leq r$ for all $\,k$ and hence
$\,\widehat{\Phi}$ is a $\,r$-partially entanglement-breaking channel
(Def.\ref{p-e-b-ch-d})}.\smallskip

\emph{If the above hypothesis holds with $\,r=1$ then equivalent
statements $\mathrm{(i)\textup{-}(iv)}$ of Theorem \ref{non-orth-2}
are valid for the channel $\,\Phi$ with an orthogonal
resolution of the identity $\,\{P_k\}$ such that
$\,\rho=\sum_{k}P_k\rho P_k\,$ for $\mu$-almost all $\,\rho$ in
$\,\S(\H_A)$.}\footnote{More precisely, $\{P_k\}$ is the
minimal orthonormal resolution of the identity possessing this property.}
\end{theorem}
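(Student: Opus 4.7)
The plan is to reduce Theorem \ref{main++} to the already-proved Theorems \ref{main} and \ref{non-orth-2} via Proposition \ref{chi-eq}, passing from the continuous equality $\chi(\Phi(\mu))=\chi(\mu)$ to reversibility with respect to a countable complete family of states of rank $\leq r$.

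First I would apply Proposition \ref{chi-eq}: the equality (\ref{chi-nd++}) implies existence of a Borel set $\S_0\subseteq\S^r$ with $\mu(\S_0)=1$ such that $\Phi$ is reversible with respect to every state $\rho\in\S_0$. The crucial observation is that $\S_0$ is a complete family of states in $\S(\H_A)$: if a positive operator $A\in\B(\H_A)$ satisfied $\Tr A\rho=0$ for all $\rho\in\S_0$, then $\Tr A\bar{\rho}(\mu)=\int\Tr A\rho\,\mu(d\rho)=0$, which forces $A=0$ by full rank of $\bar{\rho}(\mu)$. By Lemma 2 of \cite{J&P}, the complete family $\S_0$ contains a countable complete subfamily $\{\rho_i\}\subseteq\S^r$, and $\Phi$ is reversible with respect to $\{\rho_i\}$ by construction. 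Theorem \ref{main} applied to $\{\rho_i\}$ then directly yields the claimed Kraus representation of $\widehat{\Phi}$ with $\mathrm{rank}\,V_k\leq r$ and the stated bound on the number of summands.

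For the second assertion ($r=1$), the set $\S_0$ is itself a complete family of \emph{pure} states with respect to which $\Phi$ is reversible. I would apply Theorem \ref{non-orth-2} directly to $\S_0$ (not merely to its countable subfamily), obtaining an OND decomposition $\S_0=\bigcup_{k}\S_0^k$ via Lemma \ref{nod-dec} together with the associated orthogonal resolution $\{P_k\}$, and the equivalent statements (i)--(iv) of Theorem \ref{non-orth-2} then hold for $\Phi$. By construction of the OND decomposition, any pure state $\rho=|\varphi\rangle\langle\varphi|\in\S_0$ lies in exactly one $\S_0^k$, so $|\varphi\rangle$ belongs to the range of $P_k$, giving $\rho=P_k\rho P_k=\sum_{l}P_{l}\rho P_{l}$. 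Since $\mu(\S_0)=1$, this decomposition property holds for $\mu$-almost every $\rho$; the minimality asserted in the footnote is exactly the uniqueness (up to permutation) of the OND decomposition guaranteed by Lemma \ref{nod-dec}.

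The main subtle step is the completeness argument: without the full-rank assumption on $\bar{\rho}(\mu)$, the $\mu$-almost sure reversibility set could leave room in $\H_A$ where $\widehat{\Phi}$ behaves uncontrollably, and Theorem \ref{main} could not be invoked to constrain the Kraus ranks. Once completeness is secured, the rest of the argument is essentially bookkeeping: Lemma 2 of \cite{J&P} extracts the countable complete subfamily required by Theorem \ref{main}, while Theorem \ref{non-orth-2} applies verbatim to the uncountable family $\S_0$ since its proof relies on Lemma \ref{nod-dec}, which handles arbitrary families of pure states.
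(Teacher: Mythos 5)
Your proposal is correct and follows essentially the same route as the paper, which derives Theorem \ref{main++} precisely by combining Proposition \ref{chi-eq} (giving a single reversing channel $\Theta_{\bar{\rho}(\mu)}$ on a full-measure set $\S_0$), the completeness of $\S_0$ forced by the full-rank average state, Lemma 2 of \cite{J&P} to extract a countable complete subfamily for Theorem \ref{main}, and Theorem \ref{non-orth-2} for the case $r=1$. Your identification of the full-rank hypothesis as the step securing completeness, and of the footnote's minimality with the uniqueness of the OND decomposition in Lemma \ref{nod-dec}, matches the paper's intent.
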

\smallskip

We consider below some corollaries of this theorem related to
different characteristics of quantum systems and channels.

\subsection{The Holevo capacity and the minimal output entropy of a finite-dimensional channel}

Let $\Phi:\T(\H_A)\rightarrow\T(\H_B)$ be a channel between
finite-dimensional quantum systems
($\dim\H_A,\dim\H_B<+\infty$).\smallskip

The Holevo capacity of the channel $\Phi$ is defined as follows
(cf.\cite{H-SCI,N&Ch})
\begin{equation}\label{chi-cap}
\bar{C}(\Phi)=\sup_{\{\pi_i,\rho_i\}}\chi(\{\pi_i,\Phi(\rho_i)\}).
\end{equation}
It follows from inequality (\ref{chi-q-m}) that
\begin{equation}\label{chi-c-eq}
\bar{C}(\Phi)\leq\log\dim\H_A.
\end{equation}
Since the supremum in (\ref{chi-cap}) is always achieved at some
ensembles of pure states \cite{Sch-West}, Theorem \ref{main++} (with
$r=1$) and Corollary \ref{non-orth-2-c+} imply the following
criteria of an equality in (\ref{chi-c-eq}). \smallskip
\begin{corollary}\label{main-c-2}
A) \emph{An equality holds in (\ref{chi-c-eq}) if and only if
equivalent statements $\mathrm{(i)\textup{-}(iv)}$ of Theorem
\ref{non-orth-2} are valid for the channel $\,\Phi$ with a
particular orthogonal resolution of the identity
$\{P_k\}$.}\smallskip

B) \emph{If $\,\H_B=\H_A\,$ then an equality holds in
(\ref{chi-c-eq}) if and only if the channel $\,\Phi$ is unitary
equivalent to the channel $\,\Phi'$ described in Corollary
\ref{non-orth-2-c+} with a particular orthogonal resolution of the
identity $\{P_k\}$.}
\end{corollary}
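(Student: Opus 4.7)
The plan is to reduce Corollary~\ref{main-c-2} to Theorems~\ref{main++} and~\ref{non-orth-2} together with Corollary~\ref{non-orth-2-c+}, by analyzing when equality is attained throughout the chain
\[
\chi(\{\pi_i,\Phi(\rho_i)\})\;\leq\;\chi(\{\pi_i,\rho_i\})\;\leq\;H(\bar{\rho})\;\leq\;\log\dim\H_A,
\]
where the first inequality is (\ref{chi-q-m}), the second uses the identity $\chi(\{\pi_i,\rho_i\})=H(\bar{\rho})-\sum_i\pi_iH(\rho_i)$ together with $H(\rho_i)\geq 0$, and the third is the maximum-entropy bound. Taking the supremum over ensembles gives (\ref{chi-c-eq}); crucially, this supremum is attained at some ensemble of pure states \cite{Sch-West}.

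For the forward direction of part A, fix such an optimal pure-state ensemble $\{\pi_i,|\varphi_i\rangle\langle\varphi_i|\}$ realizing $\chi(\{\pi_i,\Phi(\rho_i)\})=\log\dim\H_A$. Equality throughout the chain then forces $\bar{\rho}=(\dim\H_A)^{-1}I_{\H_A}$ (so that $\{|\varphi_i\rangle\langle\varphi_i|\}$ is a complete family with full-rank average) together with $\chi(\{\pi_i,\Phi(\rho_i)\})=\chi(\{\pi_i,\rho_i\})<+\infty$. Applying Theorem~\ref{main++} with $r=1$ to the purely atomic measure $\mu=\sum_i\pi_i\delta_{|\varphi_i\rangle\langle\varphi_i|}$ then immediately delivers statements (i)-(iv) of Theorem~\ref{non-orth-2} for $\Phi$ with the associated orthogonal resolution $\{P_k\}$.

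For the reverse direction of part A, any of (i)-(iv) for some complete pure-state family $\S$ yields reversibility of $\Phi$ with respect to $\S$; Corollary~\ref{non-orth-2-c} then furnishes a complete family of orthogonal pure states with respect to which $\Phi$ is reversible, which in finite dimension must be an orthonormal basis $\{|\phi_i\rangle\}_{i=1}^d$ of $\H_A$ with $d=\dim\H_A$. Assigning uniform weights $\pi_i=1/d$ yields $\chi(\{1/d,|\phi_i\rangle\langle\phi_i|\})=\log d$, and reversibility (through Theorem~\ref{P-th}) preserves this quantity under $\Phi$, giving $\bar{C}(\Phi)\geq\log d$ and hence equality in (\ref{chi-c-eq}). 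Part B then follows by combining part A with the second bullet of Corollary~\ref{non-orth-2-c+}, which in the equidimensional finite-dimensional setting identifies reversibility with respect to a complete pure-state family with unitary equivalence to a channel $\Phi'$ of the stated Gram-matrix form.

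The main obstacle I anticipate lies in the forward direction: one must verify that the optimal pure-state ensemble exists as a genuine maximum and that its average automatically has full rank (both needed to apply Theorem~\ref{main++} with $r=1$); once these structural features are recorded, the rest is a direct application of previously established machinery.
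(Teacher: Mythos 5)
Your proposal is correct and follows essentially the same route as the paper, which derives the corollary in one line from the attainment of the supremum in (\ref{chi-cap}) at pure-state ensembles \cite{Sch-West}, Theorem \ref{main++} with $r=1$, and Corollary \ref{non-orth-2-c+}. The details you add — extracting $\bar{\rho}=(\dim\H_A)^{-1}I_{\H_A}$ (hence the full-rank hypothesis of Theorem \ref{main++}) from equality in the chain of inequalities, and running the converse through Corollary \ref{non-orth-2-c} plus preservation of the Holevo quantity under reversibility — are exactly the steps the paper leaves implicit.
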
\smallskip

Corollary \ref{main-c-2}B implies the following observation concerning the
minimal output entropy
$$
H_{\mathrm{min}}(\Phi)=\min_{\rho\in\S(\H_A)}H(\Phi(\rho))
$$
of covariant channels.\medskip

\begin{corollary}\label{main-c-3}
\emph{Let $\,\Phi:\T(\H_A)\rightarrow\T(\H_B)$, $\H_B=\H_A$, be a
quantum channel covariant with respect to some irreducible
representation $\{V_g\}_{g\in G}$ of a compact group $G$ in the
sense that $\,\Phi(V_g\rho V^*_g)=V_g\Phi(\rho)V^*_g$ for all $\,g\in
G$. The equality $\,H_{\mathrm{min}}(\Phi)=0\,$ holds if and only if
the channel $\,\Phi$ is unitary equivalent to the channel $\,\Phi'$
described in Corollary \ref{non-orth-2-c+} with a
particular orthogonal resolution of the identity $\{P_k\}$.}
\end{corollary}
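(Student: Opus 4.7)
\textbf{Proof proposal for Corollary \ref{main-c-3}.}

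The plan is to reduce the covariant hypothesis to the setting of Theorem \ref{main++} (with $r=1$) by constructing a continuous ensemble from the orbit of an optimal pure state, and then to invoke Corollary \ref{non-orth-2-c+}.

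First, suppose $H_{\mathrm{min}}(\Phi)=0$. Since $\H_A$ is finite-dimensional, there exists a pure state $\rho_0 = |\varphi_0\rangle\langle\varphi_0|$ with $\Phi(\rho_0)$ pure. Let $\mu \in \P(\S(\H_A))$ be the push-forward of the normalized Haar measure on $G$ under the map $g \mapsto V_g \rho_0 V_g^*$. By covariance, every state in the support of $\mu$ is sent by $\Phi$ to a pure state $V_g \Phi(\rho_0) V_g^*$, so $\int H(\Phi(\rho))\,\mu(d\rho)=0$. The barycenter $\bar{\rho}(\mu)=\int V_g \rho_0 V_g^*\,dg$ commutes with every $V_g$, so by Schur's lemma and irreducibility $\bar{\rho}(\mu)=d^{-1}I_{\H_A}$, where $d=\dim\H_A$; in particular $\bar{\rho}(\mu)$ has full rank. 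The same argument applied to $\Phi(\bar{\rho}(\mu))$ (using $V_g\Phi(\bar{\rho}(\mu))V_g^* = \Phi(V_g\bar{\rho}(\mu)V_g^*) = \Phi(\bar{\rho}(\mu))$) gives $\Phi(\bar{\rho}(\mu))=d^{-1}I_{\H_B}$. Using formulas (\ref{h-q-c}) and (\ref{chi-phi-mu}) both Holevo quantities are equal to $\log d$, so (\ref{chi-nd++}) holds with $r=1$.

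Now Theorem \ref{main++} (case $r=1$) applies and guarantees that the equivalent statements (i)–(iv) of Theorem \ref{non-orth-2} hold for $\Phi$. Since $\dim\H_A=\dim\H_B<+\infty$, the second bullet in the hypothesis of Corollary \ref{non-orth-2-c+} is satisfied, and its conclusion yields unitary equivalence of $\Phi$ with a channel $\Phi'(\rho)=\sum_{k,l}c_{kl}P_k\rho P_l$ of the stated form, for some orthogonal resolution of the identity $\{P_k\}$ arising from the decomposition of the $\mu$-almost sure OND subfamilies.

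For the converse, assume $\Phi$ is unitary equivalent to such a channel $\Phi'$. Pick any $k$ and any unit vector $|\varphi\rangle \in P_k\H_A$. Then $P_j|\varphi\rangle\langle\varphi|P_l = \delta_{jk}\delta_{lk}|\varphi\rangle\langle\varphi|$, so $\Phi'(|\varphi\rangle\langle\varphi|)=c_{kk}|\varphi\rangle\langle\varphi|$, and since $c_{kk}=\langle\psi_k|\psi_k\rangle=1$ the output is pure. Hence $H_{\mathrm{min}}(\Phi')=0$ and, by unitary equivalence, $H_{\mathrm{min}}(\Phi)=0$.

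The main obstacle I expect is the verification that the continuous-ensemble machinery truly delivers a saturated inequality $\chi(\Phi(\mu))=\chi(\mu)<+\infty$ to which Theorem \ref{main++} can be applied: this requires showing $\bar{\rho}(\mu)$ has full rank and that the Holevo quantities are finite and equal, which is where Schur's lemma together with irreducibility is essential. Once these structural facts are in place, the rest is a direct application of the preceding results.
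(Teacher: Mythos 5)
Your proposal is correct, but it takes a genuinely different route from the paper. The paper's proof is a one-line reduction: it cites the identity $\bar{C}(\Phi)=\log\dim\H_B-H_{\mathrm{min}}(\Phi)$, valid for channels covariant under an irreducible representation (reference \cite{H-r-c-c}), so that $H_{\mathrm{min}}(\Phi)=0$ becomes equality in (\ref{chi-c-eq}), and then invokes Corollary \ref{main-c-2}B. That chain ultimately rests on the Schumacher--Westmoreland fact that the Holevo capacity is attained on pure-state ensembles. You bypass both the capacity identity and Corollary \ref{main-c-2} entirely: you build the continuous orbit ensemble $\mu$ of an optimal pure state, use Schur's lemma to get $\bar{\rho}(\mu)=d^{-1}I_{\H_A}$ and $\Phi(\bar{\rho}(\mu))=d^{-1}I_{\H_B}$, verify $\chi(\Phi(\mu))=\chi(\mu)=\log d$ directly from (\ref{h-q-c}) and (\ref{chi-phi-mu}), and feed this into Theorem \ref{main++} with $r=1$ followed by Corollary \ref{non-orth-2-c+}. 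This is essentially an inline unfolding of the mathematics hidden inside the cited covariance identity, and it makes the proof self-contained within the paper; the small extra steps you supply along the way (attainment of $H_{\mathrm{min}}$ at a pure state via extremality, measurability of $g\mapsto V_g\rho_0V_g^*$) are all routine in finite dimensions. Your converse direction, computing $\Phi'(|\varphi\rangle\langle\varphi|)=c_{kk}|\varphi\rangle\langle\varphi|$ with $c_{kk}=1$ for $|\varphi\rangle\in P_k\H_A$, is a clean direct verification that the paper leaves implicit inside Corollary \ref{main-c-2}B. The paper's approach is shorter and delegates work to the literature; yours buys independence from the external reference at the cost of a longer argument.
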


\begin{proof} It is sufficient to note that the covariance condition implies
$\bar{C}(\Phi)=\log\dim\H_B-H_{\mathrm{min}}(\Phi)$ \cite{H-r-c-c}.
\end{proof}

Corollary \ref{main-c-3} gives a criterion of the equality
$\,H_{\mathrm{min}}(\Phi)=0\,$ for any unital qubit channel $\Phi$
(for which $\dim\H_A=\dim\H_B=2$ and  $\Phi(I_{\H_A})=I_{\H_B}$)
\cite{H-SCI}.

\subsection{Strict decrease of the Holevo quantity under partial
trace and strict concavity of the quantum conditional entropy}

Since the partial trace
$\T(\H\otimes\K)\ni\rho\mapsto\Tr_{\H}\rho\in\T(\K)\,$ is not a
$\,r$-PEB channel for $\,r<\dim\K$, Theorem \ref{main++} implies the
following observations.
\smallskip

\begin{property}\label{main-c-4} \emph{Let $\,\H_A=\H_B\otimes\H_E$ and
$\;\Phi(\rho)=\Tr_{\H_E}\rho$, $\rho\in\S(\H_A)$.}\smallskip

A) \emph{$\chi(\{\pi_i,\Phi(\rho_i)\})<\chi(\{\pi_i,\rho_i\})$ for
any ensemble $\{\pi_i,\rho_i\}$ of states in $\,\S(\H_A)$ with
full rank average state such that
$\;\mathrm{sup}_i\,\mathrm{rank}\rho_i<\dim \H_E\,$ and
$\,\chi(\{\pi_i,\rho_i\})<+\infty$.}\smallskip

B) \emph{$\chi(\Phi(\mu))<\chi(\mu)$ for any generalized ensemble $\mu$ in
$\,\P(\S(\H_A))$ with the full rank average state $\,\bar{\rho}(\mu)$ such that
$\;\mathrm{sup}_{\rho\in\mathrm{supp}\mu}\,\mathrm{rank}\rho<\dim
\H_E\,$ and $\,\chi(\mu)<+\infty$.}
\end{property}\smallskip

\begin{remark}\label{full-rank-n} By the Stinespring representation every quantum channel is unitary equivalent to a
particular subchannel of a partial trace. Since the Holevo quantity
does not strict decrease for all channels, Proposition
\ref{main-c-4} clarifies necessity of the full rank average state
condition in Theorem \ref{main++}. $\square$
\end{remark}\medskip

The quantum conditional entropy of a state $\rho$ of a composite system $AB$
is defined as follows
$$
H_{A|B}(\rho)\doteq H(\rho)-H(\Tr_{\H_A}\rho)
$$
provided
\begin{equation}\label{f-cond}
H(\rho)<+\infty\quad \textrm{and}\quad H(\Tr_{\H_A}\rho)<+\infty.
\end{equation}

By Remark \ref{ent-g-c} concavity of the function $\rho\mapsto
H_{A|B}(\rho)$ on the convex set defined by conditions (\ref{f-cond})
follows from monotonicity of the Holevo quantity under partial trace. Proposition
\ref{main-c-4}A implies the following strict concavity property of
the conditional entropy.\medskip

\begin{corollary}\label{main-c-5} \emph{Let $\rho$ be a full rank state in $\,\S(\H_{A}\otimes\H_{B})$ satisfying (\ref{f-cond}).
Then}
$$
H_{A|B}(\rho)>\sum_i \pi_i H_{A|B}(\rho_i)
$$
\emph{for any ensemble $\{\pi_i,\rho_i\}$ with the average state
$\rho$ such that $\;\mathrm{rank}\rho_i<\dim\H_A$ for all $\,i$.}
\end{corollary}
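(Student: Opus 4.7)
The plan is to apply Proposition \ref{main-c-4}A to the partial trace $\Phi(\sigma)=\Tr_{\H_A}\sigma$, viewed as a channel from $\T(\H_A\otimes\H_B)$ to $\T(\H_B)$; here the role of ``$\H_E$'' in Proposition \ref{main-c-4} is played by $\H_A$ and the role of ``$\H_B$'' is preserved. The hypotheses translate directly: the average state $\bar\rho=\rho$ has full rank by assumption; the condition $\mathrm{rank}\,\rho_i<\dim\H_A$ for all $i$ supplies the required rank bound; and $\chi(\{\pi_i,\rho_i\})\leq H(\bar\rho)=H(\rho)<+\infty$ by (\ref{f-cond}). Proposition \ref{main-c-4}A therefore yields the strict inequality
$$
\chi(\{\pi_i,\Phi(\rho_i)\})<\chi(\{\pi_i,\rho_i\}).
$$

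Next, since both $H(\rho)<+\infty$ and $H(\Phi(\rho))=H(\Tr_{\H_A}\rho)<+\infty$ by (\ref{f-cond}), the identity $\chi=H(\bar\rho)-\sum_i\pi_i H(\rho_i)$ applies to both ensembles, so the preceding inequality becomes
$$
H(\Phi(\rho))-\sum_i\pi_i H(\Phi(\rho_i))<H(\rho)-\sum_i\pi_i H(\rho_i).
$$
Concavity of $H$ together with finiteness of $H(\rho)$ and $H(\Phi(\rho))$ forces both sums to be finite, so $H(\rho_i)<+\infty$ and $H(\Phi(\rho_i))<+\infty$ for every $i$ with $\pi_i>0$; in particular each $H_{A|B}(\rho_i)=H(\rho_i)-H(\Phi(\rho_i))$ is a well-defined real number. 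Rearranging the previous display gives
$$
H(\rho)-H(\Phi(\rho))>\sum_i\pi_i\bigl[H(\rho_i)-H(\Phi(\rho_i))\bigr],
$$
which is precisely $H_{A|B}(\rho)>\sum_i\pi_i H_{A|B}(\rho_i)$.

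The only genuine obstacle is bookkeeping: matching the framework of Proposition \ref{main-c-4} (where the partial trace is $\rho\mapsto\Tr_{\H_E}\rho$ acting on $\H_B\otimes\H_E$) to the present framework ($\rho\mapsto\Tr_{\H_A}\rho$ acting on $\H_A\otimes\H_B$), and checking that condition (\ref{f-cond}) together with the rank bound supplies all the finiteness needed to convert a strict inequality of Holevo quantities into a strict inequality of conditional entropies. Beyond Proposition \ref{main-c-4}A, no new ingredient is required.
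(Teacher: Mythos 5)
Your proof is correct and follows exactly the route the paper intends: the paper derives this corollary directly from Proposition \ref{main-c-4}A applied to the partial trace over $\H_A$, and your argument is the natural filling-in of that implication, including the finiteness bookkeeping (via $\chi(\{\pi_i,\rho_i\})\leq H(\bar{\rho})<+\infty$ and concavity of $H$ to ensure each $H_{A|B}(\rho_i)$ is finite) needed to pass from the strict inequality of Holevo quantities to the strict concavity of the conditional entropy. No gap.
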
\medskip

By using Proposition \ref{main-c-4}B one can obtain a continuous
(integral) version of Corollary \ref{main-c-5}.

It is easy to construct an example showing that the  strict
concavity property of the conditional entropy stated in Corollary
\ref{main-c-5} does not hold for arbitrary state $\rho$ and its
convex decomposition. \bigskip

Theorem \ref{main++} is essentially used in the proof of the criterion of
an equality between the constrained Holevo capacity (the
$\chi$-function) and the quantum mutual information of a quantum
channel \cite{TEC}.

\section{Appendix}

\subsection{Proof of Petz's theorem (Theorem \ref{P-th}) for degenerate states}

It suffices to prove $\mathrm{(i)\Rightarrow(iii)}$ assuming that $\rho$ is an arbitrary state and $\sigma$
is a full rank state.\footnote{I would
be grateful for any reference on the proof of Theorem \ref{P-th} in
infinite dimensions without the full rank condition on the state
$\rho$.} Consider the ensemble consisting of two states $\rho$ and
$\sigma$ with probabilities $t$ and $1-t$, where $t\in(0,1)$. Let
$\sigma_{t}=t\rho+(1-t)\sigma$. By Donald's identity (Proposition
5.22 in \cite{O&P}) we have
\begin{equation}\label{d-one}
t H(\rho\|\hspace{1pt}\sigma)+(1-t)H(\sigma\|\hspace{1pt}\sigma)= t
H(\rho\|\hspace{1pt}\sigma_{t})+(1-t)H(\sigma\|\hspace{1pt}\sigma_{t})+H(\sigma_{t}\|\hspace{1pt}\sigma)
\end{equation}
and
\begin{equation}\label{d-two}
\begin{array}{c}
t
H(\Phi(\rho)\|\Phi(\sigma))+(1-t)H(\Phi(\sigma)\|\Phi(\sigma))\\\\=
t
H(\Phi(\rho)\|\Phi(\sigma_{t}))+(1-t)H(\Phi(\sigma)\|\Phi(\sigma_{t}))+H(\Phi(\sigma_{t})\|\Phi(\sigma)),
\end{array}
\end{equation}
where the left-hand sides are finite and coincide by the condition.
Since the first, the second and the third terms in the right-hand
side of (\ref{d-one}) are not less than the corresponding terms in
(\ref{d-two}) by monotonicity of the relative entropy, we obtain
\begin{equation}\label{d-three}
H(\Phi(\rho)\|\Phi(\sigma_{t}))=H(\rho\|\hspace{1pt}\sigma_{t})\quad
\textrm{and}\quad
H(\Phi(\sigma)\|\Phi(\sigma_{t}))=H(\sigma\|\hspace{1pt}\sigma_{t}).
\end{equation}
It follows from \cite[Theorem 3 and Proposition 4]{J&P} that
$\rho=\Theta_{t}(\Phi(\rho))$ for all $t\in(0,1)$, where
$$
\Theta_{t}(\varrho\,)=[\sigma_{t}]^{1/2}\Phi^*\left([\Phi(\sigma_{t})]^{-1/2}(\varrho\,)[\Phi(\sigma_{t})]^{-1/2}\right)[\sigma_{t}]^{1/2},
\quad \varrho\in\S(\H_B).
$$

To complete the proof it suffices to show that
\begin{equation}\label{s-lim-one}
\lim_{t\rightarrow +0}\Theta_{t}=\Theta_{\sigma}
\end{equation}
in the strong convergence topology (in which $\Phi_n\rightarrow\Phi$
means $\Phi_n(\rho)\rightarrow\Phi(\rho)$ for all $\rho$
\cite{Sh-H}), since this implies $\rho=\lim_{t\rightarrow
+0}\Theta_{t}(\Phi(\rho))=\Theta_{\sigma}(\Phi(\rho))$.

Since $\Theta_{t}(\Phi(\sigma))=\sigma$ for all $t\in(0,1)$, the set
of channels $\{\Theta_{t}\}_{t\in(0,1)}$ is relatively compact in
the strong convergence topology by Corollary 2 in \cite{Sh-H}. Hence
there exists a sequence $\{t_n\}$ converging to zero such that
\begin{equation}\label{s-lim-two}
\lim_{n\rightarrow+\infty}\Theta_{t_n}=\Theta_0,
\end{equation}
where $\Theta_0$ is a particular channel. We will show that
$\Theta_0=\Theta_{\sigma}$.

Note that (\ref{s-lim-two}) means that the sequence
$\{\Theta^*_{t_n}(A)\}$ tends to the operator $\Theta^*_{0}(A)$ in
the weak operator topology for any positive
$A\in\B(\H_A)$.\footnote{Since this topology coincides with the
$\sigma$-weak operator topology on the unit ball of $\B(\H_B)$
\cite{B&R}.} By Lemma \ref{l-one} below we have
$$
\lim_{n\rightarrow+\infty}[\Phi(\sigma_{t_n})]^{1/2}\Theta^*_{t_n}(A)[\Phi(\sigma_{t_n})]^{1/2}=
[\Phi(\sigma)]^{1/2}\Theta^*_{0}(A)[\Phi(\sigma)]^{1/2}
$$
in the Hilbert-Schmidt norm topology. But the explicit form of
$\Theta^*_{t_n}$ shows that
$$
[\Phi(\sigma_{t_n})]^{1/2}\Theta^*_{t_n}(A)[\Phi(\sigma_{t_n})]^{1/2}=\Phi\left([\sigma_{t_n}]^{1/2}A[\sigma_{t_n}]^{1/2}\right)
$$
and since
$\,\lim_{n\rightarrow+\infty}[\sigma_{t_n}]^{1/2}A[\sigma_{t_n}]^{1/2}=[\sigma]^{1/2}A[\sigma]^{1/2}\,$
in the trace norm topology, the above limit coincides with
$\,\Phi(\left[\sigma]^{1/2}A[\sigma]^{1/2}\right)\,$. So, we have
$\Theta^*_{0}(A)=\Theta^*_{\sigma}(A)$ for all $A\in\B(\H_A)$ and hence
$\Theta_0=\Theta_{\sigma}$.

The above observation shows that for an arbitrary sequence $\{t_n\}$
converging to zero any partial limit of the sequence
$\{\Theta_{t_n}\}$ coincides with $\Theta_{\sigma}$, which means
(\ref{s-lim-one}).\smallskip

\begin{lemma}\label{l-one}
\emph{Let $\,\{\rho_n\}$ be a sequence of states in $\,\S(\H)$
converging to a state $\rho_0$ and $\,\{A_n\}$ a sequence of
operators in the unit ball of $\,\B(\H)$ converging to an operator
$A_0$ in the weak operator topology. Then the sequence
$\,\{\sqrt{\rho_n}A_n\sqrt{\rho_n}\}$ converges to the operator
$\sqrt{\rho_0}A_0\sqrt{\rho_0}$ in the Hilbert-Schmidt norm
topology.}
\end{lemma}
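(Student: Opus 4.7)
\textbf{Proof proposal for Lemma \ref{l-one}.} The plan is to split the difference $\sqrt{\rho_n}A_n\sqrt{\rho_n}-\sqrt{\rho_0}A_0\sqrt{\rho_0}$ into three pieces that can be handled separately, namely
$$
\bigl(\sqrt{\rho_n}-\sqrt{\rho_0}\bigr)A_n\sqrt{\rho_n}+\sqrt{\rho_0}A_n\bigl(\sqrt{\rho_n}-\sqrt{\rho_0}\bigr)+\sqrt{\rho_0}(A_n-A_0)\sqrt{\rho_0},
$$
and to show that each term tends to zero in Hilbert--Schmidt norm.

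First I would establish that $\sqrt{\rho_n}\to\sqrt{\rho_0}$ in the Hilbert--Schmidt norm. Since by assumption $\rho_n\to\rho_0$ in the trace norm (the natural topology on $\S(\H)$), the Powers--St\o rmer inequality $\|\sqrt{\rho_n}-\sqrt{\rho_0}\|_2^{\,2}\leq\|\rho_n-\rho_0\|_1$ gives exactly this. Combined with the standard inequalities $\|XY\|_2\leq\|X\|\,\|Y\|_2$ and the uniform bounds $\|A_n\|\leq 1$, $\|\sqrt{\rho_n}\|\leq 1$, $\|\sqrt{\rho_0}\|\leq 1$, this immediately takes care of the first two terms of the decomposition.

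The main work is in the third term $\sqrt{\rho_0}(A_n-A_0)\sqrt{\rho_0}$, which is where I expect the principal obstacle to sit: we must convert the merely \emph{weak} operator convergence of $A_n$ into \emph{norm} convergence in the Hilbert--Schmidt class, and the only extra structure available is the trace-class nature of $\rho_0$. The idea is to diagonalise: write $\rho_0=\sum_i\lambda_i|e_i\rangle\langle e_i|$, so that
$$
\bigl\|\sqrt{\rho_0}(A_n-A_0)\sqrt{\rho_0}\bigr\|_2^{\,2}=\sum_{i,j}\lambda_i\lambda_j\,\bigl|\langle e_i|(A_n-A_0)|e_j\rangle\bigr|^{2}.
$$
By weak operator convergence each matrix entry $\langle e_i|(A_n-A_0)|e_j\rangle$ tends to zero, while uniformly $|\langle e_i|(A_n-A_0)|e_j\rangle|^2\leq(\|A_n\|+\|A_0\|)^2\leq 4$; the sequence is thus dominated by the summable array $4\lambda_i\lambda_j$ (with total mass $4(\Tr\rho_0)^2=4$), and the dominated convergence theorem for sums yields the required convergence to $0$.

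Finally I would combine the three estimates via the triangle inequality to obtain $\|\sqrt{\rho_n}A_n\sqrt{\rho_n}-\sqrt{\rho_0}A_0\sqrt{\rho_0}\|_2\to 0$, completing the proof. The only delicate point is the third step, and it hinges essentially on the trace-class (hence Hilbert--Schmidt) character of $\sqrt{\rho_0}$, which provides the summable envelope needed to promote weak operator convergence to Hilbert--Schmidt convergence after sandwiching.
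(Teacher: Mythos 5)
Your proof is correct, but it follows a genuinely different route from the paper's. The paper does not telescope: it exploits compactness of the set $\{\rho_n\}_{n\geq0}$ in $\S(\H)$ together with the compactness criterion from the Appendix of the Holevo--Shirokov paper on continuous ensembles, which yields for each $\varepsilon>0$ a single finite-rank projector $P_\varepsilon$ with $\Tr(I_{\H}-P_\varepsilon)\rho_n<\varepsilon$ uniformly in $n$; it then inserts $P_\varepsilon+(I_{\H}-P_\varepsilon)$ on both sides of $A_n$, notes that $P_\varepsilon A_nP_\varepsilon\to P_\varepsilon A_0P_\varepsilon$ in operator norm (weak operator convergence on a finite-dimensional range upgrades to norm convergence), and bounds the three remainder terms in Hilbert--Schmidt norm by $O(\sqrt{\varepsilon})$ uniformly in $n$. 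Your argument instead separates the two sources of convergence: the variation in $\rho_n$ is absorbed by the Powers--St{\o}rmer inequality (giving $\sqrt{\rho_n}\to\sqrt{\rho_0}$ in Hilbert--Schmidt norm) combined with $\|XY\|_2\leq\|X\|\,\|Y\|_2$, while the weak operator convergence of $A_n$ is promoted to Hilbert--Schmidt convergence after sandwiching by the \emph{fixed} operator $\sqrt{\rho_0}$, via diagonalization and dominated convergence over the summable array $\lambda_i\lambda_j$. Your route is more self-contained (no external compactness criterion, and no need to handle the $\rho_n$ and $A_n$ perturbations simultaneously), at the price of invoking Powers--St{\o}rmer; the paper's route avoids that inequality and produces an explicit uniform-in-$n$ error estimate from a single truncation. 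One small point worth making explicit in your write-up: $\|A_0\|\leq1$ because the unit ball of $\B(\H)$ is closed in the weak operator topology, which is what justifies the uniform bound $|\langle e_i|(A_n-A_0)|e_j\rangle|^2\leq4$ in your dominating array.
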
\smallskip

\textbf{Proof.} Since $\{\rho_n\}_{n\geq0}$ is a compact set, the
compactness criterion for subsets of $\S(\H)$ (see \cite[Proposition
in the Appendix]{H-Sh-2}) implies that for an arbitrary
$\varepsilon>0$ there exists a finite rank projector
$P_{\varepsilon}$ such that $\Tr
\bar{P}_{\varepsilon}\rho_n<\varepsilon$ for all $n\geq0$, where
$\bar{P}_{\varepsilon}=I_{\H}-P_{\varepsilon}$. We have
\begin{equation}\label{a-e}
\begin{array}{c}
\sqrt{\rho_n}A_n\sqrt{\rho_n}=\sqrt{\rho_n}P_{\varepsilon}A_n
P_{\varepsilon}\sqrt{\rho_n}\\\\+\sqrt{\rho_n}P_{\varepsilon}A_n
\bar{P}_{\varepsilon}\sqrt{\rho_n}+\sqrt{\rho_n}\bar{P}_{\varepsilon}A_n
P_{\varepsilon}\sqrt{\rho_n}+\sqrt{\rho_n}\bar{P}_{\varepsilon}A_n
\bar{P}_{\varepsilon}\sqrt{\rho_n},\quad n\geq 0,
\end{array}
\end{equation}
Since $P_{\varepsilon}$ has finite rank, $P_{\varepsilon}A_n
P_{\varepsilon}$ tends to $P_{\varepsilon}A_0 P_{\varepsilon}$ in
the norm topology and hence $\sqrt{\rho_n}P_{\varepsilon}A_n
P_{\varepsilon}\sqrt{\rho_n}$ tends to
$\sqrt{\rho_0}P_{\varepsilon}A_0 P_{\varepsilon}\sqrt{\rho_0}$ the
trace norm topology, while it is easy to show that the
Hilbert-Schmidt norm of the other terms in the right-hand side of
(\ref{a-e}) tends to zero as $\,\varepsilon\rightarrow0$ uniformly
on $n$. $\square$

\subsection{Some auxiliary results}

\begin{lemma}\label{app}
\emph{An arbitrary complete orthogonally non-decomposable family of
pure states in a separable Hilbert space $\H$ contains a countable complete orthogonally
non-decomposable subfamily.}
\end{lemma}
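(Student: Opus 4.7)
The plan is to invoke Lemma 2 in \cite{J&P} to fix a countable complete subfamily $\S_0=\{|\psi_k\rangle\langle\psi_k|\}_{k\in\mathbb{N}}$ of $\,\S$, and then to enlarge $\S_0$ by countably many further pure states drawn from $\,\S$ so that the resulting family $\S_*\subseteq\S$ is OND while remaining countable (completeness of $\S_*$ being automatic from $\S_*\supseteq\S_0$). All the content is therefore in arranging OND.

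The preliminary step is the following chain-connectivity observation for $\S$: any two vectors $|\varphi\rangle,|\varphi'\rangle$ with $|\varphi\rangle\langle\varphi|,|\varphi'\rangle\langle\varphi'|\in\S$ can be joined by a finite chain $|\varphi\rangle=|\chi_0\rangle,|\chi_1\rangle,\ldots,|\chi_n\rangle=|\varphi'\rangle$ of vectors representing states in $\,\S$ with $\langle\chi_i|\chi_{i+1}\rangle\neq 0$ for every $i$. This is exactly captured by the sets $\C^\varphi_*$ of the proof of Lemma \ref{nod-dec}: the chain-equivalence classes in $\S$ coincide with the $\C^\varphi_*$, and that proof shows that the closed span of $\C^\varphi_*$ is orthogonal to the closed span of $\S\setminus\C^\varphi_*$. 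Since $\S$ is OND, no such nontrivial orthogonal splitting of $\S$ can occur, which forces $\C^\varphi_*=\S$ for every representative $|\varphi\rangle$.

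Given this, I enumerate $\S_0\times\S_0$ and, using countable choice, for each pair $(|\psi_j\rangle,|\psi_k\rangle)$ select a finite chain $|\psi_j\rangle=|\chi^{j,k}_0\rangle,\ldots,|\chi^{j,k}_{n_{j,k}}\rangle=|\psi_k\rangle$ in $\,\S$. Define $\,\S_*$ to be the union of $\,\S_0$ with all intermediate pure states $|\chi^{j,k}_i\rangle\langle\chi^{j,k}_i|$, $0<i<n_{j,k}$, $j,k\in\mathbb{N}$. As a countable union of finite sets adjoined to the countable $\,\S_0$, the family $\,\S_*$ is countable; completeness is automatic.

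It remains to check that $\,\S_*$ is OND. By Lemma \ref{nod-dec} it suffices to show that $\,\S_*$ consists of a single chain-equivalence class with respect to its own members. By construction every pair $(|\psi_j\rangle,|\psi_k\rangle)$ from $\S_0$ is joined by a chain lying entirely in $\S_*$, and each added intermediate $|\chi^{j,k}_i\rangle$ is itself a node of such a chain, hence chain-equivalent inside $\S_*$ to $|\psi_j\rangle$. Therefore $\S_*$ possesses exactly one OND component. The only real obstacle is the chain-connectivity step, which is obtained by reading the $\C^\varphi_*$ construction of Lemma \ref{nod-dec} together with the OND hypothesis on $\S$; Zorn's lemma, mentioned in the excerpt, is not strictly needed in this direct formulation, though a maximal-element variant could be phrased in that language.
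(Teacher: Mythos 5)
Your proof is correct, but it takes a genuinely different route from the paper. The paper runs a Zorn's-lemma argument on the poset of subspaces of $\H$ generated by countable OND subfamilies of $\S$: chains have upper bounds because $\H$ is separable (a countable cofinal subchain suffices, and a countable union of countable OND subfamilies sharing states along the chain is again countable and OND), and a maximal such subspace must be all of $\H$, since otherwise completeness and non-decomposability of $\S$ supply a vector lying neither in it nor in its orthogonal complement, which can be adjoined without destroying the OND property. You instead start from the countable complete subfamily guaranteed by Lemma 2 of \cite{J&P} and saturate it with finitely many connecting states per pair, using the chain-connectivity of $\S$ extracted from the $\C^{\varphi}_{*}$ construction in the proof of Lemma \ref{nod-dec} (OND forces $\C^{\varphi}_{*}=\S$, so any two members of $\S$ are joined by a finite chain of consecutively non-orthogonal states). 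Your verification that chain-connectivity of the enlarged family implies OND is sound: an orthogonal splitting would have to break some chain at a pair of consecutive non-orthogonal states, and the case where one side of the splitting is empty is excluded by completeness. What your approach buys is a more explicit, essentially constructive argument needing only countable choice and making the OND verification transparent; what it costs is reliance on two external ingredients (the Jencova--Petz countable-complete-subfamily lemma and the connectivity analysis of Lemma \ref{nod-dec}), whereas the paper's Zorn argument is self-contained. Both proofs are valid.
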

\smallskip

\textbf{Proof.} Let $\mathfrak{H}$ be the set of all subspaces of
$\H$ generated by countable OND
subfamilies of the family $\S$ endowed with the inclusion ordering. Let
$\mathfrak{H}_0$ be a chain in $\mathfrak{H}$ and
$\H_0=\overline{\bigcup_{\K\in\mathfrak{H}_0}\K}$. Since there is a
countable chain $\{\H_k\}$ in $\mathfrak{H}$ such that
$\H_0=\overline{\bigcup_{k}\H_k}$ and a countable union of countable OND subfamilies is
a countable OND subfamily, the subspace
$\H_0$ belongs to the set $\mathfrak{H}$.  Hence $\H_0$ is an upper
bound of the chain $\mathfrak{H}_0$ and Zorn's lemma implies
existence of a maximal element $\H_m$ in $\mathfrak{H}$. Suppose,
$\H_m\varsubsetneq\H$. Since the family $\S$ is complete and
orthogonally non-decomposable, it contains a pure state
$|\varphi\rangle\langle\varphi|$ such that the
vector $|\varphi\rangle$ lies neither in $\H_m$ nor in
$\H_m^{\perp}$. By adding the state $|\varphi\rangle\langle\varphi|$
to the countable OND subfamily
corresponding to the subspace $\H_m$ we obtain a countable
OND subfamily. Hence
$\H_m\vee\{\lambda|\varphi\rangle\}\in\mathfrak{H}$ contradicting to
the maximality of $\H_m$ $\square$. \medskip

\begin{lemma}\label{onb}
\emph{Let $\{|\varphi_i\rangle\}$ be a basis in a Hilbert space $\H$
(in the sense that an arbitrary vector $|\psi\rangle$ in $\H$ has a unique
decomposition $|\psi\rangle=\sum_i c_i|\varphi_i\rangle$). Then the
set $\,\{|\phi_i\rangle\}$ of vectors defined in (\ref{v-phi-rep})
by means of an arbitrary non-degenerate probability distribution
$\,\{\pi_i\}$ is an orthonormal basis in $\H$.}
\end{lemma}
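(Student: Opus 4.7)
\textbf{Proof plan for Lemma \ref{onb}.}
The plan is to realize the vectors $|\phi_i\rangle$ in the equivalent form $|\phi_i\rangle=\pi_i^{-1/2}\bar{\rho}_{\pi}^{1/2}|\tilde{\varphi}_i\rangle$, where $\{|\tilde{\varphi}_i\rangle\}$ is the biorthogonal system associated with the Schauder basis $\{|\varphi_i\rangle\}$, and then to check orthonormality and totality directly in this form. The main technical point is making sense of $\bar{\rho}_{\pi}^{-1/2}|\varphi_i\rangle$ when $\H$ is infinite-dimensional, since $\bar{\rho}_{\pi}$ is trace class and so $\bar{\rho}_{\pi}^{-1/2}$ is generally unbounded.

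First I would record two preliminary facts. (a) Since $\{|\varphi_i\rangle\}$ is a Schauder basis, the coordinate functionals $f_i:|\psi\rangle\mapsto c_i$ (where $|\psi\rangle=\sum_j c_j|\varphi_j\rangle$) are continuous, hence by Riesz there exist vectors $|\tilde{\varphi}_i\rangle$ with $\langle\tilde{\varphi}_i|\varphi_j\rangle=\delta_{ij}$, and the family $\{|\tilde{\varphi}_i\rangle\}$ is total in $\H$: if $\langle\tilde{\varphi}_i|\psi\rangle=0$ for all $i$, then all expansion coefficients of $|\psi\rangle$ vanish, so $|\psi\rangle=0$. (b) The operator $\bar{\rho}_{\pi}$ has trivial kernel, since $\langle v|\bar{\rho}_{\pi}|v\rangle=\sum_i\pi_i|\langle v|\varphi_i\rangle|^2=0$ together with $\pi_i>0$ forces $\langle v|\varphi_i\rangle=0$ for all $i$, hence $v=0$ by totality of $\{|\varphi_i\rangle\}$. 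Thus $\bar{\rho}_{\pi}$ has full support, and $\bar{\rho}_{\pi}^{1/2}$ is injective with dense range.

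The key algebraic identity is
\begin{equation*}
\bar{\rho}_{\pi}|\tilde{\varphi}_j\rangle=\sum_i\pi_i|\varphi_i\rangle\langle\varphi_i|\tilde{\varphi}_j\rangle=\pi_j|\varphi_j\rangle,
\end{equation*}
which shows that $|\varphi_j\rangle=\pi_j^{-1}\bar{\rho}_{\pi}^{1/2}\bigl(\bar{\rho}_{\pi}^{1/2}|\tilde{\varphi}_j\rangle\bigr)$ lies in the range of $\bar{\rho}_{\pi}^{1/2}$, i.e.\ in the domain of $\bar{\rho}_{\pi}^{-1/2}$, and that $\bar{\rho}_{\pi}^{-1/2}|\varphi_j\rangle=\pi_j^{-1}\bar{\rho}_{\pi}^{1/2}|\tilde{\varphi}_j\rangle$. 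Substituting into (\ref{v-phi-rep}) gives $|\phi_j\rangle=\pi_j^{-1/2}\bar{\rho}_{\pi}^{1/2}|\tilde{\varphi}_j\rangle$, and then
\begin{equation*}
\langle\phi_i|\phi_j\rangle=(\pi_i\pi_j)^{-1/2}\langle\tilde{\varphi}_i|\bar{\rho}_{\pi}|\tilde{\varphi}_j\rangle=(\pi_i\pi_j)^{-1/2}\pi_j\langle\tilde{\varphi}_i|\varphi_j\rangle=\delta_{ij}.
\end{equation*}

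It remains to prove totality of $\{|\phi_i\rangle\}$. Suppose $|\psi\rangle$ is orthogonal to every $|\phi_i\rangle$. Then $\langle\bar{\rho}_{\pi}^{1/2}\psi|\tilde{\varphi}_i\rangle=\pi_i^{1/2}\langle\psi|\phi_i\rangle=0$ for all $i$; by the totality of $\{|\tilde{\varphi}_i\rangle\}$ established above, $\bar{\rho}_{\pi}^{1/2}|\psi\rangle=0$, and injectivity of $\bar{\rho}_{\pi}^{1/2}$ forces $|\psi\rangle=0$. Thus $\{|\phi_i\rangle\}$ is an orthonormal basis of $\H$. The main obstacle, as noted, is purely operator-theoretic: confirming that the formal expression $\sqrt{\pi_i\bar{\rho}_{\pi}^{-1}}|\varphi_i\rangle$ is well defined in the infinite-dimensional case, which is handled by the identity $\bar{\rho}_{\pi}|\tilde{\varphi}_j\rangle=\pi_j|\varphi_j\rangle$ above. $\square$
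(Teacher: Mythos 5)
Your proof is correct, but it follows a genuinely different route from the paper's. The paper starts from the overcompleteness relation $\sum_i|\phi_i\rangle\langle\phi_i|=I_{\H}$ (which is built into the construction (\ref{v-phi-rep}) and already used in Theorem \ref{main}), expands $|\phi_j\rangle=\sum_i\langle\phi_i|\phi_j\rangle|\phi_i\rangle$, applies $\bar{\rho}_{\pi}^{1/2}$ term by term to turn this into the identity $\sqrt{\pi_j}(\|\phi_j\|^2-1)|\varphi_j\rangle+\sum_{i\neq j}\sqrt{\pi_i}\langle\phi_i|\phi_j\rangle|\varphi_i\rangle=0$, and then invokes the uniqueness of the decomposition in the basis $\{|\varphi_i\rangle\}$ to kill all coefficients; completeness of $\{|\phi_i\rangle\}$ is then immediate from the overcompleteness relation. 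You instead introduce the biorthogonal system $\{|\tilde{\varphi}_i\rangle\}$, derive $\bar{\rho}_{\pi}|\tilde{\varphi}_j\rangle=\pi_j|\varphi_j\rangle$, rewrite $|\phi_j\rangle=\pi_j^{-1/2}\bar{\rho}_{\pi}^{1/2}|\tilde{\varphi}_j\rangle$, and check orthonormality and totality by direct computation. Your approach buys something the paper leaves implicit: it shows that $|\varphi_j\rangle$ lies in the range of $\bar{\rho}_{\pi}^{1/2}$, so that the formal expression $\sqrt{\pi_j\bar{\rho}_{\pi}^{-1}}|\varphi_j\rangle$ is actually well defined when $\bar{\rho}_{\pi}^{-1/2}$ is unbounded, and it does not presuppose the resolution of the identity $\sum_i|\phi_i\rangle\langle\phi_i|=I_{\H}$. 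The price is that you must invoke the continuity of the coordinate functionals of a Schauder basis (a uniform-boundedness/open-mapping fact), whereas the paper's argument uses only the uniqueness of the expansion itself. Both proofs are complete and correct.
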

\smallskip

\textbf{Proof.} Since $\sum_i|\phi_i\rangle\langle\phi_i|=I_{\H}$,
for given arbitrary $j$ we have
$$
|\phi_j\rangle=\sum_i \langle\phi_i|\phi_j\rangle |\phi_i\rangle
$$
and hence
$$
(\|\phi_j\|^2-1)|\phi_j\rangle+\sum_{i\neq j} \langle\phi_i|\phi_j\rangle
|\phi_i\rangle=0.
$$
By applying the
operator $\bar{\rho}_{\pi}=\sum_i\pi_i|\varphi_i\rangle\langle\varphi_i|\,$ to all the terms of this vector equality  we obtain
$$
\sqrt{\pi_j}(\|\phi_j\|^2-1)|\varphi_j\rangle+\sum_{i\neq j}
\sqrt{\pi_i}\langle\phi_i|\phi_j\rangle |\varphi_i\rangle=0.
$$
Since $\{|\varphi_i\rangle\}$ is a basis and $\pi_i>0$ for all $i$, we have $\|\phi_j\|^2=1$
and $\langle\phi_i|\phi_j\rangle=0$ for all $i\neq j$. Thus
$\{|\phi_i\rangle\}$ is an orthonormal system of vectors in $\H$. It
is a complete system, since $\sum_i|\phi_i\rangle\langle\phi_i|=I_{\H}$.
$\square$
\bigskip

I am grateful to A.S.Holevo and to the participants of his seminar "Quantum probability, statistic, information"
(the Steklov Mathematical Institute) for the
useful discussion. I am also grateful to A.Jencova and to T.Shulman for
the valuable help in solving the particular questions.

The work is supported in part by the
Scientific Program "Mathematical Control Theory and Dynamic Systems" of the Russian
Academy of Sciences and the Russian Foundation for Basic Research, projects
10-01-00139-a and 12-01-00319-a.

\end{document}